\begin{document}

\title{Optimal Grid Drawings of Complete Multipartite Graphs and an Integer Variant of the Algebraic Connectivity }

\author{Ruy Fabila-Monroy, Carlos Hidalgo-Toscano, Clemens Huemer, Dolores Lara, Dieter Mitsche}

\institute{ Laboratoire Dieudonné, Univ. Nice, dmitsche@unice.fr,}

\author{%
Ruy Fabila-Monroy\inst{1}  \and 
Carlos Hidalgo-Toscano\inst{1}
Clemens Huemer\inst{2} \and 
Dolores Lara\inst{1} \and 
Dieter Mitsche\inst{3}
}%
\institute{
CINVESTAV-IPN,\\
\email{ruyfabila@math.cinvestav.edu.mx, cmhidalgo@math.cinvestav.mx,  dlara@cs.cinvestav.mx},\\ 
\and
Universitat Polit\`ecnica de Catalunya,\\
\email{clemens.huemer@upc.edu}
\and
Laboratoire Dieudonn\'e, Univ. Nice\\
\email{dmitsche@unice.fr}
}

\maketitle

\begin{abstract}
How to draw the vertices of a complete multipartite graph $G$ on different points of a bounded $d$-dimensional integer grid, such that the sum of squared distances between vertices of $G$ is (i) minimized or (ii) maximized? For both problems we provide a characterization of the solutions.
For the particular case $d=1$, our solution for (i) also settles the minimum-2-sum problem for complete bipartite graphs; the minimum-2-sum problem was defined by Juvan and Mohar in 1992. Weighted centroidal Voronoi tessellations are the solution for (ii). Such drawings are related with Laplacian eigenvalues of graphs. This motivates us to study which properties of the algebraic connectivity of graphs carry over to the restricted setting of drawings of graphs with integer coordinates.

\end{abstract}

\section{Introduction}

Let $r,d$ be positive integers. Let $n_1 \le \dots \le n_r$ be positive integers such
that $\sum n_i= (2M+1)^d$ for some integer $M$. We consider straight line drawings
of the complete $r$-partite graph $K_{n_1,\dots,n_r}$ into the $d$-dimensional integer grid
\[P:=\left \{ (x_1,\dots,x_d) \in \mathbb{Z}^d:-M \le x_i \le M \right \}.\]
No two vertices of the graph are drawn on the same grid point. 
Note that such a drawing corresponds to a coloring of the points of $P$ with $r$ colors, such that color $i$ appears $n_i$ times, for $i=1,\ldots,r$.
The goal is to find the assignment of colors to the points of $P$ such that the sum of squared distances between points of different colors is (i) minimized or (ii) maximized. The motivation for this problem stems from the following relation between drawings of a graph and spectral theory:

Let $G=(V,E)$ be a graph with vertex set $V=\{1,\dots,N\}$, and let $deg(i)$ denote the degree of vertex $i$. The \emph{Laplacian}
matrix of $G$ is the $N \times N$ matrix, $L=L(G)$, whose entries are
\begin{equation*}
 L_{i,j} = \left\{ \begin{array}{cl}
         deg(i), & \text{if } i=j,\\
        -1, & \text{if } i \neq j \text{ and } ij \in E,\\
        0, & \text{if } i \neq j \text{ and } ij \notin E.\end{array} \right.
\end{equation*}
Let $\lambda_1(G) \le \lambda_2(G) \le \dots \le \lambda_N(G)$ be the eigenvalues of $L$.
The \emph{algebraic connectivity} (also known as the Fiedler value~\cite{fiedler}) of $G$
is the value of $\lambda_2(G)$. It is related to many graph invariants (see~\cite{fiedler}), and in particular to the size of the separator of a graph, giving rise to partitioning techniques using the associated eigenvector (see~\cite{spielman}).
 Spielman and Teng~\cite{spielman} proved the following lemma:
\begin{lemma}[Embedding Lemma]\label{lem:emb}
\[ \lambda_2(G) = \min \frac{\sum_{ij \in E} \|\vec{v}_i-\vec{v}_j\|^2}{\sum_{i \in V} \|\vec{v}_i\|^2}, \]
 and
\[ \lambda_N(G) = \max \frac{\sum_{ij \in E} \|\vec{v}_i-\vec{v}_j\|^2}{\sum_{i \in V} \|\vec{v}_i\|^2}, \]
where the minimum, respectively maximum, is taken over all tuples $(\vec{v}_1,\dots,\vec{v}_N)$ of vectors 
$\vec{v}_i \in \mathbb{R}^d$ with $\sum_{i=1}^{N} v_i =\mathbf{0}$, and not all $v_i$ are zero-vectors $\mathbf{0}$.
\end{lemma}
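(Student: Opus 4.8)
The plan is to reduce the vector-valued ratio to the classical Rayleigh-quotient characterization of the eigenvalues of the symmetric positive semidefinite matrix $L$, and then invoke the Courant--Fischer min-max theorem. First I would record the fundamental quadratic-form identity: for any scalar vector $x=(x_1,\dots,x_N)^T\in\mathbb{R}^N$ one has $x^T L x=\sum_{ij\in E}(x_i-x_j)^2\ge 0$, which is checked directly from the definition of $L$ by expanding and collecting the contributions of each vertex and each edge. In particular $L$ is positive semidefinite, its smallest eigenvalue is $\lambda_1=0$, and the all-ones vector $\mathbf 1$ is a corresponding eigenvector because every row of $L$ sums to $0$. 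Consequently the linear constraint $\sum_i \vec v_i=\mathbf 0$ is, read coordinate by coordinate, exactly orthogonality to this first eigenvector $\mathbf 1$.

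Next I would settle the scalar case $d=1$. Applying Courant--Fischer to $L$, minimizing the Rayleigh quotient $x^T L x/x^T x$ over the subspace $\mathbf 1^{\perp}=\{x:\sum_i x_i=0\}$ returns exactly $\lambda_2$, while maximizing over all nonzero $x$ returns $\lambda_N$. For the maximum the restriction to $\mathbf 1^{\perp}$ is harmless: a $\lambda_N$-eigenvector is orthogonal to the $\lambda_1$-eigenvector $\mathbf 1$ whenever $\lambda_N>\lambda_1$, and if $\lambda_N=0$ then $E=\varnothing$ and both sides of the claimed identity vanish. This proves the lemma when each $\vec v_i$ is a real number.

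To pass to general $d$, I would arrange the vectors $\vec v_i$ as the rows of an $N\times d$ matrix $V$ with columns $V^{(1)},\dots,V^{(d)}$. Summing the identity above over the columns gives $\sum_{ij\in E}\|\vec v_i-\vec v_j\|^2=\sum_{k=1}^d (V^{(k)})^T L\,V^{(k)}$ and $\sum_i\|\vec v_i\|^2=\sum_{k=1}^d\|V^{(k)}\|^2$, and the constraint forces every $V^{(k)}\in\mathbf 1^{\perp}$. The global ratio is then a nonnegative-weighted average of the per-column Rayleigh quotients, each of which lies in $[\lambda_2,\lambda_N]$ by the scalar case (columns that are zero contribute nothing to either sum); hence the whole ratio is bounded below by $\lambda_2$ and above by $\lambda_N$. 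The two extremes are attained by placing a $\lambda_2$- (respectively $\lambda_N$-) eigenvector in a single column and setting the other columns to $\mathbf 0$, which is admissible since not all $\vec v_i$ vanish.

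The main point to get right is not any hard estimate but the bookkeeping around the constraint: verifying that $\sum_i\vec v_i=\mathbf 0$ is precisely column-wise orthogonality to $\mathbf 1$, and that imposing it does not discard the maximizer for $\lambda_N$. Everything else follows from the trace/linearity decomposition over columns together with the standard Rayleigh-quotient bounds.
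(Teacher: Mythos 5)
Your proposal is correct and follows essentially the same route as the proof the paper relies on (Spielman--Teng's argument, which the paper cites and adapts for $\lambda_N$): decompose the embedding into coordinates, observe that the constraint $\sum_i \vec v_i=\mathbf 0$ is column-wise orthogonality to the eigenvector $\mathbf 1$ of $\lambda_1=0$, bound each column's Rayleigh quotient by $[\lambda_2,\lambda_N]$, and combine via the mediant inequality $\min_k a_k/b_k \le \sum_k a_k/\sum_k b_k \le \max_k a_k/b_k$, with equality attained by placing an eigenvector in a single coordinate. Your handling of the edge cases (zero columns, and the degenerate case $\lambda_N=0$, i.e.\ $E=\emptyset$) is also sound, so there is nothing to fix.
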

In fact, Spielman and Teng~\cite{spielman} proved the Embedding Lemma for $\lambda_2(G)$, but the result
for $\lambda_N(G)$ follows by very similar arguments; when adapting the proof of~\cite{spielman} we have to replace 
the last inequality given there by the inequality $\sum_i x_i / \sum_i y_i \leq \max_i \frac{x_i}{y_i}$, for $x_i, y_i >0.$

Let $\mathbf{v}= (\vec{v}_1,\dots,\vec{v}_N)$ be a tuple of positions
defining a drawing of $G$ (vertex $i$ is placed at $\vec{v}_i$).
Let
\begin{equation}\label{eq:def}
\lambda(\mathbf{v}):=\frac{\sum_{ij \in E} \|\vec{v}_i-\vec{v}_j\|^2}{\sum_{i \in V} \|\vec{v}_i\|^2}.
\end{equation}

Note that $\|\vec{v}_i-\vec{v}_j\|^2$ is equal to squared length of the
edge $ij$ in the drawing defined by $\mathbf{v}$.
Lemma~\ref{lem:emb} provides a link between
the algebraic connectivity of $G$ and its straight line drawings.
Clearly
\[\lambda_2(G) \le \lambda(\mathbf{v}) \le \lambda_N(G).\]
We remark that in dimension $d=1$, optimal drawings $\mathbf{v}$ are eigenvectors of $L(G)$ and $\lambda(\mathbf{v})$ is the
well known Rayleigh quotient.

In this paper we study how well we can
approximate $\lambda_2(G)$ and $\lambda_N(G)$ with drawings with certain restrictions.
First, we restrict ourselves to drawings in which the vertices are placed at points
with integer coordinates and no two vertices are placed at the same point.
Since $\lambda(\alpha \mathbf{v})=\lambda(\mathbf{v})$ for $\alpha \in \mathbb{R} \setminus \{0\}$, we have that
$\lambda_2(G)$ and $\lambda_N(G)$
can be approximated arbitrarily closely with straight line drawings with integer coordinates of sufficiently
large absolute value. We therefore bound the absolute value of such drawings and consider
only drawings in the bounded $d$-dimensional integer grid $P$.
 Juvan and Mohar~\cite{juvan2,juvan} already studied drawings of graphs with integer coordinates for $d=1$. More precisely, the authors consider 
the {\it{minimum-$p$-sum}}-problem: for $0 < p < \infty$, a graph $G$ and a bijective mapping $\Psi$ from $V$ to $\{1,\ldots,N\}$,
define $\sigma_p(G, \Psi)=\left(\sum_{uv \in E(G)} |\Psi(u)-\Psi(v)|^p \right)^{1/p}$, 
and for $p=\infty$, let $\sigma_p(G, \Psi)=\max_{uv \in E(G)}|\Psi(u)-\Psi(v)|$. The quantity $\sigma_p(G)=\min_{\Psi} \sigma_p(G,\Psi)$ 
(where the minimum is taken over all bijective mappings) is then called the minimum-$p$-sum of $G$, and if $p=\infty$, it is also called 
the bandwidth of $G$. In~\cite{juvan} relations between the min-$p$-sum and $\lambda_2(G)$ and $\lambda_N(G)$ are analyzed, and also 
polynomial-time approximations of the minimum-$p$-sum based on the drawing suggested by the eigenvector corresponding to $\lambda_2(G)$ are
given. In~\cite{juvan2}  the minimum-$p$-sums and its relations to $\lambda_2(G)$ and $\lambda_N(G)$ are studied for the cases of
random graphs, random regular graphs, and Kneser graphs. For a survey on the history of these problems, see~\cite{chinn} 
and~\cite{chvatalova}.\\
The use of eigenvectors in graph drawing has been studied for instance in~\cite{koren}, and we also mention~\cite{rocha} as a recent work on spectral bisection.\\

In the next two sections we characterize the optimal drawings $\mathbf{v}$ for complete multipartite graphs $K_{n_1,\ldots,n_r}$ which minimize/maximize $\lambda(\mathbf{v})$. The assumption $N=\sum_{i=1}^{r}n_i = (2M+1)^d$ made in the beginning is to ensure that every drawing
satisfies the condition $\sum_{i=1}^N \vec{v}_i=\mathbf{0}$.
The Laplacian eigenvalues of $K_{n_1,\ldots,n_r}$ are known to be, see~\cite{bolla},
$$0^1, (N-n_r)^{n_r-1}, (N-n_{r-1})^{n_{r-1}-1},\ldots, (N-n_2)^{n_2-1}, (N-n_1)^{n_1-1},N^{r-1}$$
where the superindexes denote the multiplicities of the eigenvalues.
Therefore, $N-n_r \le \lambda(\mathbf{v}) \le N.$\\

\begin{figure}[t]
\begin{minipage}[b]{0.45\linewidth}
\centering
\includegraphics[scale=0.2]{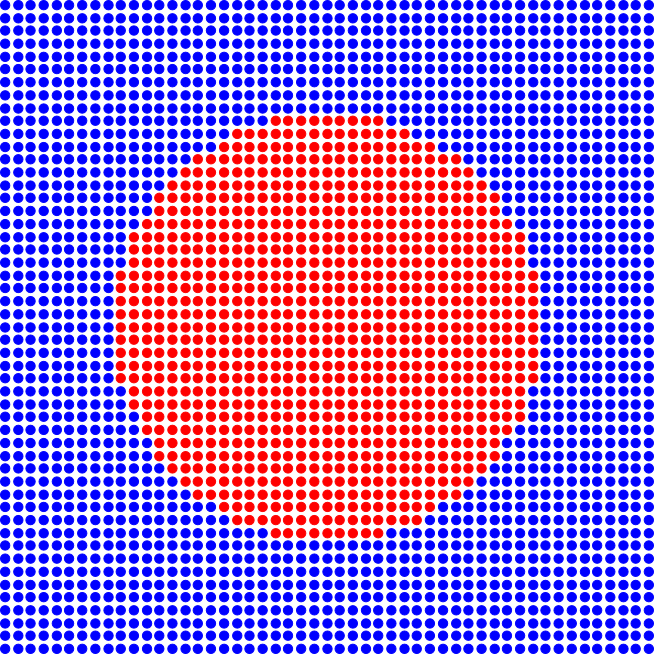}
\end{minipage}
\hspace{0.5cm}
\begin{minipage}[b]{0.45\linewidth}
\centering
\includegraphics[scale=0.2]{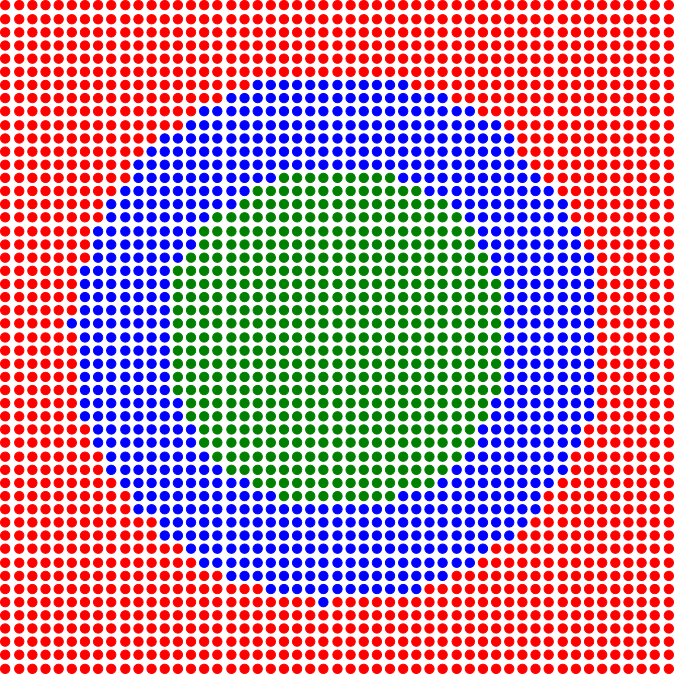}
\end{minipage}
\caption{The best way to minimize the sum of squared distances between points of different colors on a $51 \times 51$ integer grid. Left: for $r=2$ colors with $1/3$ of the points in red and $2/3$ of the points in blue. Right: for $r=3$ colors, with $1359$ red points, $724$ blue points, and $518$ green points. }
\label{fig:minimize2colors}
\end{figure}

Two examples of optimal drawings in dimension $d=2$ which minimize $\lambda(\mathbf{v})$ are given in Figure~\ref{fig:minimize2colors}.
Figure~\ref{fig:maximize2colors} and Figure~\ref{fig:maximize6colors} show examples which maximize $\lambda(\mathbf{v})$. 
We mention that we obtained all these drawings with computer simulations, using simulated annealing.
The solution for minimizing $\lambda(\mathbf{v})$ shown in Figure~\ref{fig:minimize2colors} consists of concentric rings and applies to the case when 
all color classes have different size. While this solution is unique, we will show that if the color classes have the same size, 
then there are exponentially many drawings that minimize $\lambda(\mathbf{v})$. In that case, the solutions are characterized as those drawings 
where for each color class 
all its points sum up to $\mathbf{0}$. 
\begin{figure}[t]
\begin{minipage}[b]{0.45\linewidth}
\centering
\includegraphics[scale=0.2]{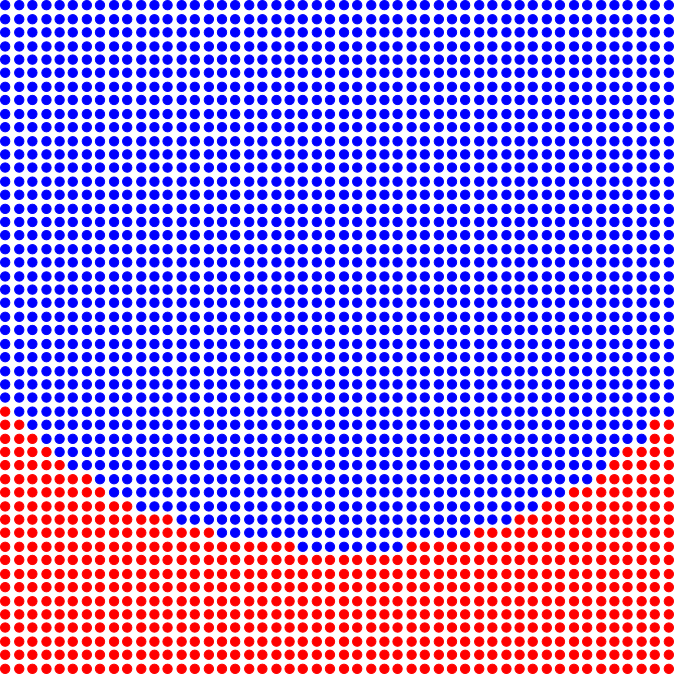}
\end{minipage}
\hspace{0.5cm}
\begin{minipage}[b]{0.45\linewidth}
\centering
\includegraphics[scale=0.22]{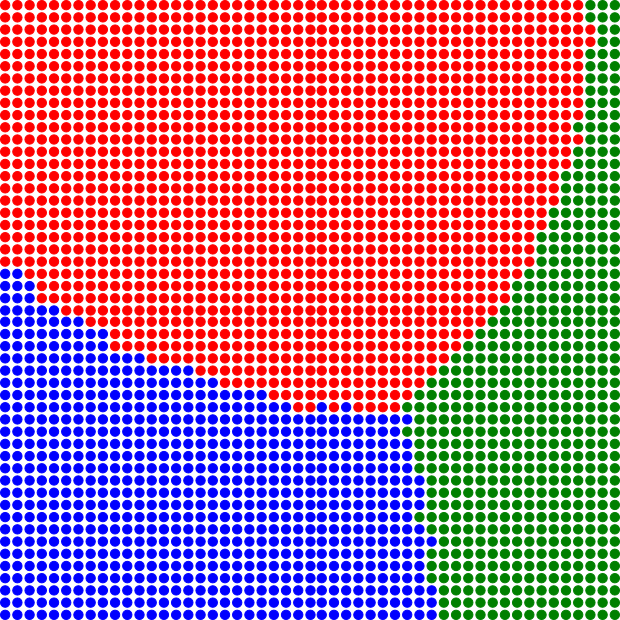}
\end{minipage}
\caption{The best way to maximize the sum of squared distances between points of different colors on a $51 \times 51$ integer grid. Left: for $r=2$ colors with $3/4$ of the points in blue and $1/4$ of the points in red. Right: for $r=3$ colors, with $1359$ red points, $724$ blue points, and $518$ green points. }
\label{fig:maximize2colors}
\end{figure}
\begin{figure}[t]
\begin{minipage}[b]{0.45\linewidth}
\centering
\includegraphics[scale=0.2]{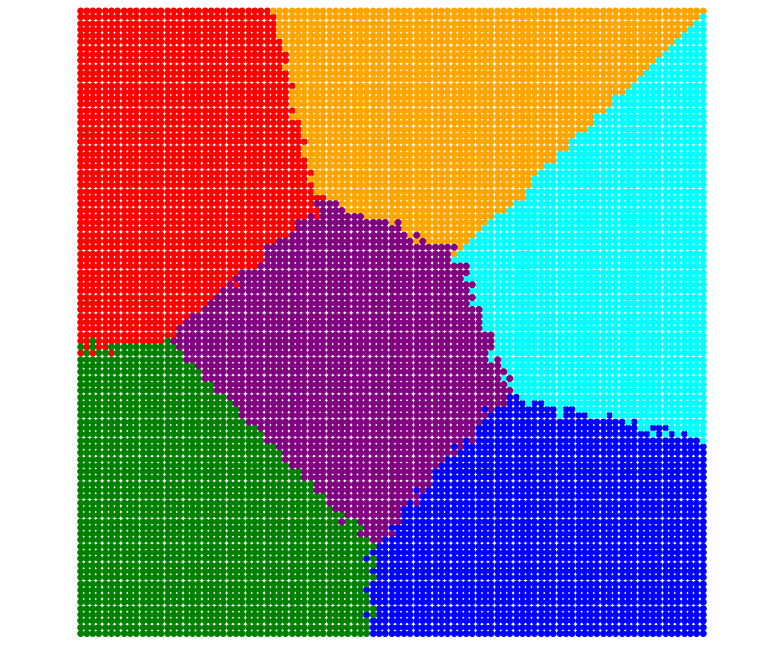}
\end{minipage}
\hspace{0.5cm}
\begin{minipage}[b]{0.45\linewidth}
\centering
\includegraphics[scale=0.2]{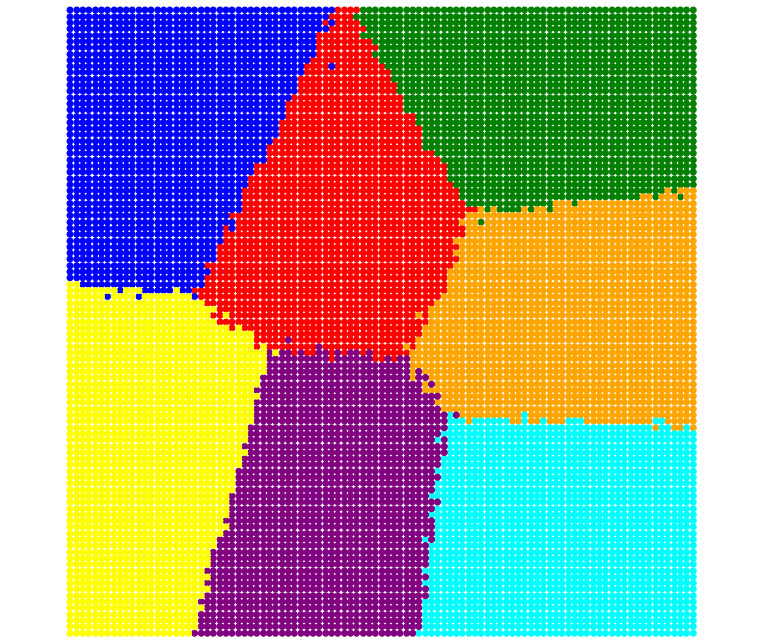}
\end{minipage}
\caption{The best way to maximize the sum of squared distances between points of
	different colors on a $101 \times 101$ integer grid. Left: for $r = 6$ colors, with $1701$ purple points and $1700$ points of every other color. Right: for $r = 7$ colors, with $1459$ yellow points and $1457$ points of every other color.}
\label{fig:maximize6colors}
\end{figure}
As can be observed in Figures~\ref{fig:maximize2colors} and Figure~\ref{fig:maximize6colors} (and proved in Section 3), 
the solution for maximizing $\lambda(\mathbf{v})$ is given by (weighted) centroidal Voronoi diagrams, 
which are related to clustering~\cite{du}. Let us give the definition of a centroidal Voronoi tessellation, according to~\cite{du}. 
Given an open set  $\Omega \subseteq \mathbb{R}^d$, the set $\{V_i\}_{i=1}^{r}$ is called a {\it{tessellation}} of $\Omega$ if $V_i \cap V_j = \emptyset$ for $i\neq j$
and $\cup_{i=1}^{r} \overline{V_i} = \overline{\Omega}.$ Given a set of points $\{c_i\}_{i=1}^{r}$ belonging to  $\overline\Omega $, the Voronoi region $\hat{V_i}$
corresponding to the point $c_i$ is defined by 
$$\hat{V_i}=\{x \in \Omega \ | \ ||x-c_i|| < || x - c_j|| \ \mbox{for} \ j=1,\ldots,r, j \neq i\}.$$
The points $\{c_i\}_{i=1}^{r}$ are called {\it{generators}} or {\it{sites}}. 
The set $\{\hat{V_i}\}_{i=1}^{r}$ is a {\it{Voronoi tessellation}} or {\it{Voronoi diagram}}.  
A Voronoi diagram is {\it{multiplicatively weighted}}, see~\cite{auren}, if each generator $c_i$ has an associated weight $w_i>0$ and the weighted Voronoi region of $c_i$ is
$$\hat{V_i}=\{x \in \Omega \ | \ ||x-c_i||w_j < || x - c_j||w_i \ \mbox{for} \ j=1,\ldots,r, j \neq i\}.$$
A Voronoi tessellation is {\it{centroidal}} 
if the generators are the centroids for each
Voronoi region. 
Voronoi diagrams have also been defined for discrete sets $P$ instead of regions $\Omega$~\cite{du}.\\

Finally, in Section 4 we focus on graph drawings in dimension $d=1$ and treat the question on what can be said about approximations of 
eigenvectors with bounded integer vectors. In particular, we
study the relation between the algebraic connectivity and an integer version of the algebraic connectivity and the minimum-2-sum. We think analogous relations should
also hold for drawings in higher dimension; we leave this for further research.

\section{Optimal Drawings for Minimizing $\lambda(\mathbf{v})$ }{\label{sec:minimize}}

In the following we give bounds on $\lambda(\mathbf{v})$. Note that in Equation~(\ref{eq:def}), the term $\sum_{i \in V}\|\vec{v}_i\|^2$ is the same for all drawings
on $P$.
Let  $S:=\sum_{v \in P}  \|v \|^2.$ We first calculate the value of $S$ which we need later on.
\begin{restatable}{proposition}{propS}\label{lem:S}
\[S=2d(2M+1)^{d-1}\frac{M (M+1) (2M+1) }{6}.\]
 \end{restatable}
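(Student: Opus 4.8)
The plan is to exploit the product structure of the grid $P$ together with the fact that $\|v\|^2$ splits coordinatewise. First I would write $\|v\|^2 = \sum_{i=1}^{d} x_i^2$ for $v = (x_1,\dots,x_d)$ and interchange the order of summation, so that
\[
S = \sum_{v \in P} \sum_{i=1}^{d} x_i^2 = \sum_{i=1}^{d} \sum_{v \in P} x_i^2.
\]
By the symmetry of $P$ under permuting coordinates, each of the $d$ inner sums is equal, so it suffices to evaluate $\sum_{v\in P} x_1^2$ and multiply by $d$.

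Next I would use that $P$ is a Cartesian product $\{-M,\dots,M\}^d$. Fixing the value of the first coordinate $x_1$, the remaining $d-1$ coordinates range freely over $\{-M,\dots,M\}$, contributing a factor of $(2M+1)^{d-1}$ independent of $x_1^2$. Hence
\[
\sum_{v \in P} x_1^2 = (2M+1)^{d-1} \sum_{x_1=-M}^{M} x_1^2.
\]
The remaining one-dimensional sum is handled by the standard closed form: since the summand is even, $\sum_{x=-M}^{M} x^2 = 2\sum_{x=1}^{M} x^2 = 2\cdot \frac{M(M+1)(2M+1)}{6}$.

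Combining these two observations gives
\[
S = d\,(2M+1)^{d-1}\cdot 2\cdot \frac{M(M+1)(2M+1)}{6} = 2d(2M+1)^{d-1}\frac{M(M+1)(2M+1)}{6},
\]
which is exactly the claimed identity. There is no genuine obstacle here; this is a routine computation, and the only points requiring minor care are the coordinate-symmetry step (justifying that all $d$ inner sums coincide) and correctly accounting for the negative coordinate values via the factor of $2$ in the sum of squares. I would present the sum-of-squares formula $\sum_{x=1}^{M} x^2 = M(M+1)(2M+1)/6$ as a known identity rather than reproving it.
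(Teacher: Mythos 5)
Your proposal is correct and is essentially the same argument as the paper's: both reduce $S$ to $2d(2M+1)^{d-1}\sum_{i=1}^{M} i^2$ by exploiting the coordinatewise splitting of $\|v\|^2$ and the product structure of the grid, with your factors $d$ (coordinate symmetry), $(2M+1)^{d-1}$ (free coordinates), and $2$ (signs) matching exactly the multiplicity count in the paper's proof. The only difference is presentational — you interchange the order of summation, while the paper counts the number of appearances of each term $i^2$ directly.
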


The proof is given in the appendix.\\

Let $A_1,\dots, A_r$ be the partition classes of $K_{n_1,\dots,n_r}$ with
$|A_i| =n_i$ (for $1 \le  i \le r$). Let $N=(2M+1)^d=\sum_{i=1}^{r} n_i$. 
Let $\mathbf{v}$ be a fixed straight line drawing of $K_{n_1,\dots, n_r}$. In what
follows we abuse notation and say that a point $v \in P$ is in $A_i$
if a vertex of $A_i$ is mapped to $v$. We also use $A_i$ to refer
to the image of $A_i$ under $\mathbf{v}$.

Let $A$ and $B$ be two finite subsets of $\mathbb{R}^d$. We define
\[A \cdot B :=\sum_{\substack{v \in A \\ w \in B }} v \cdot w,\]
where $\cdot$ is the dot product. 
We will need the following property:
\begin{restatable}{proposition}{propProd}{\label{prop:nonpos}}
Let $A_1,\ldots,A_r$ be $r \geq 2$ finite subsets of $\mathbb{R}^d$ such that \[\sum_{i=1}^{r}\sum_{\substack{v \in A_i }} v =\vec{0}.\]
Then 
\[ \sum_{i=1}^{r-1}\sum_{j=i+1}^{r}A_i \cdot A_j = -\frac{1}{2}\sum_{i=1}^{r}\left|\left|\sum_{\substack{v \in A_i }} v\right|\right|^2.\] 
\end{restatable}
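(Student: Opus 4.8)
The plan is to reduce everything to the vector sums of the individual classes. For each $i$, I would write $\vec{s}_i := \sum_{v \in A_i} v$ for the sum of the vectors in $A_i$. The key observation is that the pairwise products $A_i \cdot A_j$ factor through these sums: by the bilinearity (distributivity) of the dot product,
\[A_i \cdot A_j = \sum_{v \in A_i}\sum_{w \in A_j} v \cdot w = \left(\sum_{v \in A_i} v\right)\cdot\left(\sum_{w \in A_j} w\right) = \vec{s}_i \cdot \vec{s}_j.\]
This single identity is what makes the statement work, and it is the only step that requires a moment's thought; everything afterwards is a formal manipulation.

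Next I would invoke the hypothesis, which in this notation reads $\sum_{i=1}^r \vec{s}_i = \sum_{i=1}^r \sum_{v \in A_i} v = \vec{0}$. Taking the squared norm of this zero vector and expanding the square gives
\[0 = \left\|\sum_{i=1}^r \vec{s}_i\right\|^2 = \sum_{i=1}^r \|\vec{s}_i\|^2 + 2\sum_{i=1}^{r-1}\sum_{j=i+1}^{r} \vec{s}_i \cdot \vec{s}_j.\]
Solving for the cross-term sum yields $\sum_{i=1}^{r-1}\sum_{j=i+1}^{r} \vec{s}_i \cdot \vec{s}_j = -\frac{1}{2}\sum_{i=1}^{r} \|\vec{s}_i\|^2$.

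Finally, I would substitute the factorization $A_i \cdot A_j = \vec{s}_i \cdot \vec{s}_j$ into the left-hand side of the claim and use $\|\vec{s}_i\| = \left\|\sum_{v\in A_i} v\right\|$ on the right to obtain exactly the asserted identity. I do not anticipate any genuine obstacle here: the statement is essentially the polarization identity $\|x+y\|^2 = \|x\|^2 + \|y\|^2 + 2\,x\cdot y$ generalized to $r$ summands, combined with the collapse of each $A_i \cdot A_j$ to $\vec{s}_i \cdot \vec{s}_j$. The only point demanding slight care is the bookkeeping of index ranges when expanding the squared norm: the diagonal terms $i=j$ produce the squared-norm contributions, while the off-diagonal terms, each counted once via the ordering $i<j$ and then doubled, produce the cross terms.
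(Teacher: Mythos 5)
Your proof is correct and follows essentially the same route as the paper's: both collapse each $A_i \cdot A_j$ to $\vec{s}_i \cdot \vec{s}_j$ via bilinearity of the dot product and then exploit the hypothesis $\sum_{i=1}^{r} \vec{s}_i = \vec{0}$. The only difference is organizational --- you expand $\bigl\|\sum_{i=1}^{r} \vec{s}_i\bigr\|^2 = 0$ in one step, whereas the paper first treats $r=2$ and then pairs each $A_i$ against the union of the remaining classes and double-counts; the underlying computation is identical.
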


The proof is given in the appendix.

\begin{lemma}\label{lem:minvalue}
 Let $\mathbf{v}$ be a fixed straight line drawing of $G=(V,E)=K_{n_1,\dots,n_r}$.
 Then
 \[ \lambda(\mathbf{v})=N+\frac{1}{S} \sum_{i=1}^r \left (  -n_i \sum_{v \in A_i} \|v \|^2\right) +\frac{1}{S} \sum_{i=1}^{r} \left|\left|\sum_{\substack{v \in A_i }} v\right|\right|^2 \nonumber.\]
\end{lemma}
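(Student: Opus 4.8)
The plan is to start from the definition~(\ref{eq:def}) and reduce the statement to a direct computation of the numerator. First I would note that since $N=(2M+1)^d=|P|$, the drawing $\mathbf{v}$ is a bijection between $V$ and $P$, so the denominator is exactly $\sum_{i\in V}\|\vec v_i\|^2=\sum_{v\in P}\|v\|^2=S$. Hence $\lambda(\mathbf v)=\frac1S\sum_{ij\in E}\|\vec v_i-\vec v_j\|^2$, and everything comes down to evaluating the numerator. Because the edges of $K_{n_1,\dots,n_r}$ are precisely the pairs lying in distinct classes, I would rewrite the numerator as $\sum_{1\le i<j\le r}\sum_{v\in A_i,\,w\in A_j}\|v-w\|^2$ and expand each term via $\|v-w\|^2=\|v\|^2-2\,v\cdot w+\|w\|^2$, treating the squared-norm part and the cross-term part separately.

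For the squared-norm part, a counting argument determines the coefficient of $\|v\|^2$ for a fixed $v\in A_k$: this point is paired with every point lying outside its own class, hence it is counted $N-n_k$ times. This gives $\sum_{i<j}\sum_{v\in A_i,\,w\in A_j}\big(\|v\|^2+\|w\|^2\big)=\sum_{k=1}^r (N-n_k)\sum_{v\in A_k}\|v\|^2$. Since the classes partition $P$, the $N$-part sums to $N\sum_{v\in P}\|v\|^2=N\cdot S$, so the squared-norm contribution is $N\cdot S-\sum_{k=1}^r n_k\sum_{v\in A_k}\|v\|^2$.

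For the cross terms, $-2\sum_{i<j}\sum_{v\in A_i,\,w\in A_j}v\cdot w=-2\sum_{i<j}A_i\cdot A_j$. Here I would invoke Proposition~\ref{prop:nonpos}: its hypothesis $\sum_{i=1}^r\sum_{v\in A_i}v=\vec 0$ holds because $P$ is symmetric about the origin and all of $P$ is used, giving $\sum_{v\in P}v=\vec0$. Thus $\sum_{i<j}A_i\cdot A_j=-\frac12\sum_{i=1}^r\left\|\sum_{v\in A_i}v\right\|^2$, so the cross-term contribution equals $+\sum_{i=1}^r\left\|\sum_{v\in A_i}v\right\|^2$. Adding the two contributions and dividing by $S$ yields exactly the claimed identity, with $-\frac1S\sum_k n_k\sum_{v\in A_k}\|v\|^2$ rewritten as $\frac1S\sum_k\big(-n_k\sum_{v\in A_k}\|v\|^2\big)$.

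The computation is essentially mechanical, so I do not expect a genuine conceptual obstacle. The only places needing care are the bookkeeping in the squared-norm count, where one must verify that each $\|v\|^2$ is weighted by $N-n_k$ (and not by $N$ or $N-1$), and the recognition that Proposition~\ref{prop:nonpos} is exactly the tool that converts the otherwise awkward sum of inter-class dot products $\sum_{i<j}A_i\cdot A_j$ into the clean sum of squared class-sums appearing in the final formula.
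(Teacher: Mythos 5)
Your proposal is correct and follows essentially the same route as the paper's own proof: expand $\|v-w\|^2=\|v\|^2+\|w\|^2-2\,v\cdot w$ over the inter-class edges, count each $\|v\|^2$ with multiplicity $N-n_k$, and convert the cross terms via Proposition~\ref{prop:nonpos}. You are in fact slightly more careful than the paper in spelling out why the denominator equals $S$ and why the hypothesis $\sum_{v\in P}v=\vec 0$ of Proposition~\ref{prop:nonpos} holds, but these are details, not a different argument.
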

\begin{proof}

$$
 \lambda(\mathbf{v}) =\frac{1}{S}\sum_{(v,w) \in E} \| v-w\|^2  =  \frac{1}{S} \sum_{(v,w) \in E} \left ( \|v \|^2+\|w \|^2-2 v \cdot w  \right ).\\
$$
Since in the complete multipartite graph each $v \in A_i$ is adjacent to all vertices but the $n_i$ vertices of its class $A_i$, this further equals
$$\lambda(\mathbf{v}) =  \frac{1}{S} \sum_{i=1}^r \left (  (N-n_i) \sum_{v \in A_i} \|v \|^2\right) -\frac{2}{S} \sum_{i \neq j} A_i \cdot A_j$$
$$ = N+\frac{1}{S} \sum_{i=1}^r \left (  -n_i \sum_{v \in A_i} \|v \|^2\right) +\frac{1}{S} \sum_{i=1}^{r} 
\left|\left|\sum_{\substack{v \in A_i }} v\right|\right|^2.$$
\qed
 \end{proof}

The following theorem provides best possible drawings whenever one can draw $K_{n_1,\dots,n_r}$ on $P$ 
such that for each class $A_i$ we have $\sum_{v \in A_i} v=\vec{0}.$ This can be achieved for instance if $|A_i|$ is even for all but one of the classes, 
and for each point $v \in A_i$ in the drawing, also $-v \in A_i$, and the remaining vertex is drawn at $\mathbf{0}$. If all $|A_i|$ are even, then the 
theorem also holds under the assumption that no vertex is drawn at $\mathbf{0}$ (recall that $|P|$ is odd). 
Otherwise, 
the best drawings are such that $\sum_{i=1}^{r} \left|\left|\sum_{\substack{v \in A_i }} v\right|\right|^2$ is minimized, 
and the drawing in the second case of the theorem only gives an approximation.

\begin{theorem}\label{thm:mainmin}
Let $\mathbf{v}$ be a straight line drawing of $K_{n_1,\dots,n_r}$ that minimizes $\lambda(\mathbf{v})$. If $n_1=n_2=\ldots=n_r$, 
then $\mathbf{v}$ minimizes $\sum_{i=1}^{r} \left|\left|\sum_{\substack{v \in A_i }} v\right|\right|^2$; 
in particular, if $\sum_{v \in A_i} v=\textbf{0}$, for all $1\leq i \leq r,$ then  
 $\lambda(\mathbf{v})=N-n_r$. 
If $n_1 < n_2 < \ldots < n_r$, then $\mathbf{v}$ has the following structure:
For each $i=1,\ldots, r-1$, the union of the smallest $i$ color classes, $\bigcup_{j=1}^{i} A_j$, forms a ball centered at $\textbf{0}$.
\end{theorem}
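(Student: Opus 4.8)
The plan is to start from the formula in Lemma~\ref{lem:minvalue}. Writing $m_i := \sum_{v \in A_i}\|v\|^2$ and $s_i := \sum_{v \in A_i} v$, minimizing $\lambda(\mathbf{v})$ is equivalent to minimizing
\[ g(\mathbf{v}) = -\sum_{i=1}^{r} n_i m_i + \sum_{i=1}^{r}\|s_i\|^2. \]
Two features of $g$ do all the work: the first sum is controlled by a rearrangement inequality, while the second is nonnegative and vanishes exactly when every class is symmetric under $v \mapsto -v$.

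For the equal case $n_1 = \dots = n_r = n$, the key observation is that the classes partition $P$, so $\sum_i n_i m_i = n\sum_{v \in P}\|v\|^2 = nS$ is the \emph{same} for every drawing. Hence $\lambda(\mathbf{v}) = N - n + \tfrac{1}{S}\sum_i\|s_i\|^2$, and minimizing $\lambda$ reduces to minimizing $\sum_i\|s_i\|^2$, which is the first assertion. Since each $\|s_i\|^2 \ge 0$, any drawing with $s_i = \mathbf{0}$ for all $i$ attains $\lambda = N - n = N - n_r$, matching the eigenvalue lower bound, and is therefore optimal.

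For the distinct case $n_1 < \dots < n_r$ I would bound the two sums separately. Writing $\sum_i n_i m_i = \sum_{v \in P} n_{c(v)}\|v\|^2$, where $c(v)$ is the color of $v$, this pairs the values $\|v\|^2$ against a multiset of weights in which $n_i$ occurs $n_i$ times; by the rearrangement inequality it is maximized precisely when larger weights meet larger norms, that is, when colors are sorted by class size along increasing $\|v\|^2$. Because the $n_i$ are strictly increasing, this forces, for each $i$, the union $\bigcup_{j \le i} A_j$ to be the $\sum_{j \le i} n_j$ points of $P$ of smallest norm, i.e. a ball centered at $\mathbf{0}$. Together with $\sum_i\|s_i\|^2 \ge 0$ this yields $g \ge -\max \sum_i n_i m_i$. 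The bound is tight and is realized by the ball structure itself: each ball $\bigcup_{j \le i} A_j$, and hence each difference $A_i$, is closed under $v \mapsto -v$, so every $s_i = \mathbf{0}$ and the second sum vanishes while the first is maximal. Thus the symmetric nested-ball drawing is optimal, and conversely any minimizer must attain the rearrangement maximum (forcing the nested-ball structure) and must satisfy $\sum_i\|s_i\|^2 = 0$ (forcing symmetry).

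The main obstacle is the interaction of the two sums at the outermost occupied radius. The rearrangement step pins $\bigcup_{j \le i} A_j$ down only up to how ties at the boundary ring are resolved, and exactly there the demand $s_i = \mathbf{0}$ can clash with strict radial sorting when $\sum_{j \le i} n_j$ is not the cardinality of a full ball (a parity effect: for $d=1$ a class of even size cannot be a symmetric ball containing the origin, so e.g. $\{-1,1\}$ beats $\{0,1\}$). In the regime where each partial ball can be chosen symmetrically --- precisely the situation arranged in the paragraph preceding the theorem, where $s_i = \mathbf{0}$ is achievable --- both objectives are optimized at once and the clean ball structure holds; otherwise the minimizer genuinely trades the two terms off and the stated drawing is only an approximation. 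I would therefore carry out the rearrangement argument in full generality and invoke the $\pm v$ symmetry of balls to annihilate the second sum whenever the cardinalities permit, flagging the boundary case explicitly.
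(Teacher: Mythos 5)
Your proposal is correct and follows essentially the same route as the paper: both start from Lemma~\ref{lem:minvalue}, observe that $\sum_{i=1}^r n_i \sum_{v \in A_i}\|v\|^2$ is constant over all drawings in the equal-size case (reducing the problem to minimizing $\sum_{i=1}^r \bigl\|\sum_{v \in A_i} v\bigr\|^2$, with the eigenvalue $N-n_r$ certifying optimality when all centroid sums vanish), and obtain the nested-ball structure in the strictly increasing case by pairing larger classes with points of larger norm --- your rearrangement inequality is precisely the rigorous form of the paper's greedy ``draw the smallest class closest to the origin, then iterate'' argument. The boundary/parity tension you flag between maximizing the weighted norm sum and annihilating the centroid terms is real, and the paper concedes exactly this point in the paragraph preceding the theorem, where the concentric-ring drawing is described as only an approximation whenever the classes cannot all be drawn with $\sum_{v \in A_i} v = \mathbf{0}$; your treatment is in fact more careful on this point than the paper's own proof, which asserts that both conditions ``can be guaranteed at the same time.''
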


\begin{proof}
Consider first the case when all classes $A_i$ have the same number of points $n=n_i$. Take a drawing $\mathbf{v}$. 
By Lemma~\ref{lem:minvalue},
 $$\lambda(\mathbf{v}) = N  +\frac{1}{S} \sum_{i=1}^r \left (  -n_i \sum_{v \in A_i} \|v \|^2\right)+\frac{1}{S}\sum_{i=1}^{r} 
 \left|\left|\sum_{\substack{v \in A_i }} v\right|\right|^2=N-n+\frac{1}{S}\sum_{i=1}^{r} \left|\left|\sum_{\substack{v \in A_i }} v\right|\right|^2.$$
 Then $\lambda(\mathbf{v})$ is minimized if
$\sum_{i=1}^{r} \left|\left|\sum_{\substack{v \in A_i }} v\right|\right|^2$ is minimized.
 If there are drawings $\mathbf{v}$ such that  $\sum_{v \in A_i} v=\textbf{0}$ for each class, 
 then $\sum_{i=1}^{r} \left|\left|\sum_{\substack{v \in A_i }} v\right|\right|^2=0$. 
 Since the algebraic connectivity of $K_{n,n\ldots,n}$ (with $N=r \cdot n$) is $N-n$, such a drawing is best possible.
Consider then the case $n_1 < n_2 < \ldots < n_r.$ By Lemma~\ref{lem:minvalue}, $\mathbf{v}$ minimizes $\lambda(\mathbf{v})$ if
 $\sum_{i=1}^{r} \left|\left|\sum_{\substack{v \in A_i }} v\right|\right|^2$ is minimized and $\sum_{i=1}^{r} \left(n_i \sum_{v \in A_i} \|v \|^2\right)$ is 
 as large as possible.
 Both conditions can be guaranteed at the same time.  $\sum_{i=1}^{r} \left|\left|\sum_{\substack{v \in A_i }} v\right|\right|^2$  
 can be kept small (or equal to $0$) when drawing each $A_i$ in a symmetric way around the origin. The quantity
$\sum_{i=1}^{r} \left(n_i \sum_{v \in A_i} \|v \|^2\right)$ is maximized when the smallest class $A_1$ is drawn 
such that $\sum_{v \in A_1}\|v \|^2$ is as small as possible, which is the case when the vertices of $A_1$ are drawn as close as possible to the origin;
then, in an optimal drawing the vertices of the second smallest class $A_2$ are drawn as close as possible to the origin on grid points which are not
occupied by $A_1$. In the same way, iteratively, for the $i$-th smallest class all grid points closest to the origin, that are not yet occupied by smaller classes,
are selected. This results in a drawing with concentric rings around the origin. 
\qed
\end{proof}
\begin{remark}
If some of the classes have the same number of elements, then the optimal solutions are given by a combination of the two cases of Theorem~\ref{thm:mainmin}. 
That is, several classes with the same number of elements can form one of the concentric rings in the drawing which satisfies $\sum_{v \in A_i} v=\textbf{0}$ for all color 
classes.
\end{remark}

We show next that the number of optimal drawings of $K_{n_1,\dots,n_r}$ that minimize $\lambda(\mathbf{v})$ can be exponential if some classes have the same number of elements.
For the sake of simplicity of the exposition, we show this only for the case $K_{1,2m,2m}$ and dimension $d=1$. 
The argument can be adapted to the general case.

\begin{proposition}
Let $d=1$, and $P=\{-2m,-2m+1,\ldots,2m-1,2m\}$. There exists a constant $c > 0$ such that the number $\mathcal{N}$ of straight line drawings $\mathbf{v}$ of $K_{1,2m,2m}$ on $P$  which minimize $\lambda(\mathbf{v})$ satisfies
 $c16^{m}/m^5 < \mathcal{N} < 16^{m}$.
\end{proposition}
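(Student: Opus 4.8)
The plan is to first reduce the enumeration to a single combinatorial quantity and then bound that quantity from both sides.

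\emph{Reduction.} Write the three classes as $A_1$ (size $1$) and $A_2,A_3$ (each of size $2m$), and set $Q:=P\setminus\{0\}=\{\pm 1,\dots,\pm 2m\}$; note $\sum_{v\in P}v=0$ and $\sum_{v\in P}v^2=S$. Apply Lemma~\ref{lem:minvalue} with $n_1=1$, $n_2=n_3=2m$. Writing $A_1=\{a\}$ and using $\sum_i\sum_{v\in A_i}v^2=S$, the middle term of Lemma~\ref{lem:minvalue} equals $(2m-1)a^2-2mS$, which is strictly minimized at $a=0$; and once $A_1=\{0\}$ the last term is $(\sum_{v\in A_2}v)^2+(\sum_{v\in A_3}v)^2$, which, since the two class-sums are negatives of each other, is minimized exactly when $\sum_{v\in A_2}v=\sum_{v\in A_3}v=0$. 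Hence $\mathbf v$ minimizes $\lambda(\mathbf v)$ iff $A_1=\{0\}$ and $A_2$ is a $2m$-subset of $Q$ of sum $0$ (with $A_3=Q\setminus A_2$). As the two equal-size colour classes are labelled, the map $A_2\mapsto\mathbf v$ is a bijection, so $\mathcal N=f(0)$, where $f(s):=\#\{T\subseteq Q:\ |T|=2m,\ \sum_{v\in T}v=s\}$.

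\emph{Upper bound.} Trivially $f(0)\le\sum_s f(s)=\binom{4m}{2m}<2^{4m}=16^m$, giving $\mathcal N<16^m$.

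\emph{Lower bound.} By Stirling $\binom{4m}{2m}\ge 16^m/(C\sqrt m)$, so it suffices to show $f(0)\ge c'\binom{4m}{2m}/m^{4.5}$; equivalently, for a uniformly random $2m$-subset $T$ of $Q$ with sum $Y$, that $\Pr[Y=0]\ge c'/m^{4.5}$. The symmetry $v\mapsto -v$ of $Q$ gives $E[Y]=0$ and $f(s)=f(-s)$, and the hypergeometric variance formula yields $\mathrm{Var}(Y)=2m\cdot\frac{2m}{4m-1}\cdot\frac{(2m+1)(4m+1)}{6}=O(m^3)$, so by Chebyshev at least half the mass lies in the window $|Y|\le L:=Cm^{3/2}$. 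It then remains to convert this into a pointwise bound at $0$: I would prove a flatness estimate $f(s)\le\mathrm{poly}(m)\,f(0)$ on the window (or, more strongly, unimodality $f(0)=\max_s f(s)$), after which $\tfrac12\binom{4m}{2m}\le\sum_{|s|\le L}f(s)\le(2L+1)\,\mathrm{poly}(m)\,f(0)$ yields the claim with room to spare; the generous target exponent $5$, against the true order $16^m/m^2$, absorbs all constants and polynomial losses.

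\emph{Main obstacle.} Everything except the flatness/unimodality step is routine. The difficulty is the pointwise anti-concentration of $Y$ at the single value $0$: in the generating function $f(s)=[x^{2m}y^s]\prod_{k=1}^{2m}\bigl(1+x(y^k+y^{-k})+x^2\bigr)$ the individual factors are not unimodal in $y$, so unimodality of $f$ does not follow from a product-of-unimodal argument. I would instead establish the needed flatness by an explicit swap injection — for instance repeatedly replacing the largest element $k\in T$ by the largest admissible value below it that is not already in $T$ — which shifts $\sum_{v\in T}v$ by a controlled amount and is boundedly many-to-one, thereby bounding $f(s)$ by $f(0)$ up to a polynomial factor; alternatively one may invoke a local central limit theorem for sampling without replacement.
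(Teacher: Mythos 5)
Your reduction and your upper bound are correct and essentially identical to the paper's (characterize minimizers via Lemma~\ref{lem:minvalue}/Theorem~\ref{thm:mainmin} as $A_1=\{0\}$ plus a $2m$-subset $A_2$ of $Q=P\setminus\{0\}$ with sum $0$, then bound $\mathcal N\le\binom{4m}{2m}<16^m$); in fact your direct two-term minimization is a nice, slightly more careful version of the paper's appeal to Theorem~\ref{thm:mainmin}. The genuine gap is exactly where you flag it: the lower bound rests on a pointwise anti-concentration statement $f(0)\ge\binom{4m}{2m}/\mathrm{poly}(m)$ which you never prove. Neither of your two proposed routes is carried out, and as sketched the swap injection does not work: a set in your Chebyshev window can have sum as large as $\Theta(m^{3/2})$, so ``repeatedly swapping'' is a composition of up to $\Theta(m^{3/2})$ steps, and a composition of $t$ maps that are each $C$-to-one can be $C^t$-to-one; without an explicit encoding that inverts the whole composite map at polynomial cost, no polynomial flatness bound follows. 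A local CLT for sampling without replacement would indeed suffice, but that is a nontrivial theorem you would have to prove or cite precisely. Worse, your remark that ``the generous target exponent $5$ absorbs all polynomial losses'' is false: chasing your own inequalities, $f(0)\gtrsim\binom{4m}{2m}/\bigl(L\cdot F(m)\bigr)$ with $L=\Theta(m^{3/2})$, $\binom{4m}{2m}=\Theta(16^m/\sqrt m)$ and flatness factor $F(m)$ gives $f(0)=\Omega\bigl(16^m/(m^2F(m))\bigr)$, so you need $F(m)=O(m^3)$; if your flatness argument lost, say, a factor $m^4$, there would be no constant $c>0$ with $16^m/m^6\ge c\,16^m/m^5$ for all $m$, and the claimed bound would not follow.

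The paper sidesteps anti-concentration at the single value $0$ entirely, with a pigeonhole trick worth internalizing. Restrict to sets $A_2$ having exactly $m$ positive and $m$ negative elements. The positive part and the negated negative part are then two independent $m$-subsets of $\{1,\dots,2m\}$, and each of their sums takes one of at most $2m^2+m$ values. By pigeonhole some value $s$ --- not necessarily $0$, and this is the whole point --- is attained by at least $\binom{2m}{m}/(2m^2+m)$ positive parts, and by the symmetry $v\mapsto-v$ by at least as many negated negative parts; pairing any such positive part with any such negative part produces a sum-zero set, giving at least $\bigl(\binom{2m}{m}/(2m^2+m)\bigr)^2=\Omega(16^m/m^5)$ minimizers by $\binom{2m}{m}\sim 4^m/\sqrt{\pi m}$. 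One never needs the popular sum to equal a prescribed value, only that the two independent halves share a popular value (equivalently, apply Cauchy--Schwarz to $\sum_s g(s)^2$, where $g(s)$ counts $m$-subsets of $\{1,\dots,2m\}$ with sum $s$). That is precisely what eliminates the obstacle your approach runs into.
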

\begin{proof}
Let $A_1,A_2,A_3$ be the classes of $K_{1,2m,2m}$, with $n_1=1$ and $n_2=n_3=2m$.
Theorem~\ref{thm:mainmin} characterizes the optimal drawings as all drawings that satisfy $\sum_{v \in A_i} v=0$. 
Then  the only vertex of class $A_1$ is drawn at position $0$ in any optimal drawing. For the upper bound, the number of such drawings is at most
${{4m}\choose{2m}} <16^{m}$, since there are at most ${{4m}\choose{2m}}$ choices for mapping the vertices of $A_2$ to $P\backslash\{0\}$, 
and then the positions of the vertices in $A_3$ are already determined. Regarding the lower bound, in order to have $\sum_{v \in A_2} v=0$, 
we must have $\sum_{v \in A_2, v < 0} -v=\sum_{v \in A_2, v > 0} v$. We may thus consider only drawings with exactly $m$ elements $v$ of $A_2$ with $v>0$. 
There are at most $\sum_{i=1}^{2m} i =2m^2+m$ different sums that can be obtained by $\sum_{v \in A_2, v > 0} v$,  and the same holds for
$\sum_{v \in A_2, v < 0} -v$. Thus, one of these sums, call it $s$, appears in at least $\frac{{{2m}\choose{m}}}{2m^2+m}$ of all the drawings 
of $\{v \in A_2, v > 0\}$, and by symmetry, the same sum $s$ appears also at least $\frac{{{2m}\choose{m}}}{2m^2+m}$ times when considering $\sum_{v \in A, v < 0} -v$.
Any drawing for which at the same time we have $\sum_{v \in A_2, v > 0} v=s$ and $\sum_{v \in A_2, v < 0} -v=s$ is an optimal drawing. 
There are at least $\left(\frac{{{2m}\choose{m}}}{2m^2+m}\right)^2=\Omega\left(\frac{16^m}{m^5}\right)$ such drawings, where we use the asymptotic estimate 
${{2m}\choose{m}} \sim  \frac{4^m}{\sqrt{\pi m}}$. Hence the lower bound follows.
\qed
\end{proof}

\section{Optimal Drawings for Maximizing $\lambda(\mathbf{v})$}

We now study drawings of $K_{n_1,\ldots,n_r}$ that maximize $\lambda(\mathbf{v})$. The following solution as a Voronoi diagram has to be considered as an approximation, due to the discrete setting and due to the given bounding box. However, the bigger the numbers $n_i$, the better the approximation to the boundary curves between adjacent Voronoi regions.

\begin{theorem}\label{thm:mainmax}
Let $\mathbf{v}$ be a straight-line drawing of $K_{n_1,\dots,n_r}$ on $P$ that maximizes $\lambda(\mathbf{v})$. If $n_1=n_2=\ldots=n_r$, then $\mathbf{v}$ defines a centroidal Voronoi diagram. If the $n_i$ are not all the same, then $\mathbf{v}$ defines a multiplicatively weighted centroidal Voronoi diagram.
\end{theorem}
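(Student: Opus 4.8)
The plan is to start from the exact expression for $\lambda(\mathbf{v})$ in Lemma~\ref{lem:minvalue} and turn the maximization into a weighted clustering problem. Since $N$ and $S=\sum_{v\in P}\|v\|^2$ are the same for every drawing on $P$, maximizing $\lambda(\mathbf{v})$ is equivalent to maximizing
\[ F:=-\sum_{i=1}^{r} n_i\sum_{v\in A_i}\|v\|^2+\sum_{i=1}^{r}\left\|\sum_{v\in A_i}v\right\|^2. \]
Writing $c_i:=\frac{1}{n_i}\sum_{v\in A_i}v$ for the centroid of class $A_i$, we have $\left\|\sum_{v\in A_i}v\right\|^2=n_i^2\|c_i\|^2$, and the parallel-axis identity $\sum_{v\in A_i}\|v\|^2=\sum_{v\in A_i}\|v-c_i\|^2+n_i\|c_i\|^2$ collapses $F$ to
\[ F=-\sum_{i=1}^{r} n_i\sum_{v\in A_i}\|v-c_i\|^2=:-G. \]
Hence maximizing $\lambda(\mathbf{v})$ is the same as minimizing the weighted dispersion $G=\sum_{i} n_i\sum_{v\in A_i}\|v-c_i\|^2$, the discrete, size-constrained analogue of the energy functional whose minimizers are centroidal Voronoi tessellations.

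Next I would record the two defining properties of a (weighted) CVT. For the \emph{centroidal} property, note that for any candidate generator $z$ one has $\sum_{v\in A_i}\|v-z\|^2=\sum_{v\in A_i}\|v-c_i\|^2+n_i\|c_i-z\|^2\ge\sum_{v\in A_i}\|v-c_i\|^2$, so that $c_i$ is the unique minimizer and
\[ G=\min_{z_1,\dots,z_r}\ \sum_{i=1}^{r} n_i\sum_{v\in A_i}\|v-z_i\|^2. \]
Thus in an optimal drawing the natural generators are exactly the centroids $c_i$, which is precisely the centroidal condition; this holds verbatim in both cases of the theorem.

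The \emph{Voronoi} property is where the size constraint $|A_i|=n_i$ forces a swap argument rather than a free reassignment of points. Let $\mathbf{v}$ be optimal and put $f_i(x):=n_i\|x-c_i\|^2$. For $v\in A_i$ and $w\in A_j$ with $i\ne j$, exchanging $v$ and $w$ yields a valid drawing (the class sizes are preserved), so evaluating $G$ on the swapped partition while keeping the \emph{unchanged} generators $c_i,c_j$ and using optimality gives
\[ f_i(v)+f_j(w)\le f_i(w)+f_j(v),\qquad\text{equivalently}\qquad f_i(v)-f_j(v)\le f_i(w)-f_j(w). \]
Therefore $A_i$ and $A_j$ are separated by a level set of $f_i-f_j$, i.e.\ a set $\{x:n_i\|x-c_i\|^2-n_j\|x-c_j\|^2=\text{const}\}$. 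When all $n_i$ are equal this separator is the hyperplane equidistant from $c_i$ and $c_j$, so the partition is the ordinary Voronoi diagram of the centroids and $\mathbf{v}$ is a centroidal Voronoi tessellation. When the $n_i$ differ the separator is an Apollonius sphere; matching $n_i\|x-c_i\|^2=n_j\|x-c_j\|^2$ with the multiplicatively weighted bisector $\|x-c_i\|\,w_j=\|x-c_j\|\,w_i$ from the Introduction identifies the weights as $w_i=1/\sqrt{n_i}$, so $\mathbf{v}$ is a multiplicatively weighted centroidal Voronoi diagram.

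The main obstacle is that the swap argument only yields separators that are level sets of $f_i-f_j$ at an a priori unknown constant, not necessarily the zero level set defining the true (weighted) bisector through the centroids; that constant is pinned down by the requirement that each region contain exactly $n_i$ grid points. In the idealized continuous setting these constants vanish and the separators become the exact (weighted) bisectors, which is why the statement must be read as an approximation and why the fit to the boundary curves improves as the $n_i$ grow. Controlling these boundary effects — grid points lying on or near a separator, and the mutual consistency of all pairwise thresholds — is the delicate point, and is exactly the source of the approximation caveat stated before the theorem.
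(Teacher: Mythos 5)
Your proof is correct and arrives at exactly the same key functional as the paper, but by a different derivation, and you then carry the argument further than the paper does. The paper obtains $\lambda(\mathbf{v})S = \sum_{v,w\in P}\|v-w\|^2 - \sum_{i=1}^r n_i\sum_{v\in A_i}\|v-c_{A_i}\|^2$ by applying the Apostol--Mnatsakanian identity $\sum_{i<j}\|p_i-p_j\|^2 = n\sum_i \|p_i-c\|^2$ to the whole grid $P$ and to each class; you instead start from Lemma~\ref{lem:minvalue} and collapse the two correction terms with the parallel-axis identity. These are algebraically equivalent routes to the same reduction: maximizing $\lambda(\mathbf{v})$ is minimizing $G=\sum_i n_i\sum_{v\in A_i}\|v-c_i\|^2$. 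Where you genuinely add content is at the final step: the paper simply asserts that minimizing $G$ means each class is drawn as close as possible to its centroid and cites \cite{du} and \cite{auren} for the (weighted) CVT interpretation, whereas you prove the centroidal property (the centroids are the optimal generators) and derive the Voronoi property from the exchange inequality $f_i(v)+f_j(w)\le f_i(w)+f_j(v)$, showing that classes are separated by level sets of $n_i\|x-c_i\|^2-n_j\|x-c_j\|^2$ and correctly identifying the multiplicative weights as $w_i=1/\sqrt{n_i}$, which the paper never makes explicit. Your closing caveat --- that the swap argument fixes the separators only up to additive constants determined by the prescribed class sizes, so the Voronoi structure holds in an approximate sense --- is not a gap relative to the paper: it is precisely the qualification the authors themselves state in the sentence preceding the theorem; your version actually makes precise where that approximation enters.
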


\begin{proof}
We make use of the following fact: let $Q$ be an arbitrary set of $n$ points $p_1,\ldots, p_n$ in $\mathbb{R}^d$.
Let $c$ be the centroid of $Q$, $c=\frac{1}{n}\sum_{i=1}^{n} p_i.$ 
 Then, see~\cite{apostol},
 \begin{equation}\label{eq:apostol}
\sum_{i=1}^{n-1}\sum_{j=i+1}^{n} ||p_i-p_j||^2=n \sum_{i=1}^{n} ||p_i - c||^2.
\end{equation}
In the case of our theorem, let $\mathbf{v}$ be a drawing of $K_{n_1,\dots,n_r}$ drawn on
\[P=\left \{ (x_1,\dots,x_d) \in \mathbb{Z}^d:-M \le x_i \le M \right \}.\]
Denote by  $c_{A_1},\ldots, c_{A_r}$ the centroids of the classes $A_1, \ldots, A_r$, respectively. Then, from Equation~(\ref{eq:def}) and $S=\sum_{v \in P}  \|v \|^2$ we get
$$\lambda(\mathbf{v})|S|= \sum_{(v,w) \in E} \| v-w\|^2 = \sum_{i<j}\sum_{\substack{v \in A_i \\ w \in A_j }}  ||v - w||^2$$
$$=\sum_{v,w \in P}||v - w||^2 - \sum_{i=1}^{r}\sum_{v,w \in A_i} ||v - w||^2$$
$$= \sum_{v,w \in P}||v - w||^2 - \sum_{i=1}^{r}n_i \sum_{v \in A_i}||v - c_{A_i}||^2,$$
where in the last equation we use~(\ref{eq:apostol}).
The quantity $\sum_{v,w \in P}||v - w||^2$ is the same for each drawing of $K_{n_1,\dots,n_r}$, and $\sum_{i=1}^{r}n_i \sum_{v \in A_i}||v - c_{A_i}||^2$ is minimized if for each class $A_i$, its vertices are drawn as close as possible to its centroid $c_{A_i}$. Then the union of the $r$ regions defined by $A_1,\ldots, A_r$ forms a centroidal Voronoi tessellation, see~\cite{du}. Note that when the $n_i's$ are different, then this is a multiplicatively weighted Voronoi diagram, see~\cite{auren}.
\qed
\end{proof}

\section{An Integer Variant of the Algebraic Connectivity}

In this section we consider drawings in dimension $d=1$ of graphs $G=(V,E)$ with $V=\{1,\ldots,N\}$, that is, drawings $\mathbf{v}$ where the vertices of $G$ 
are mapped to different points of $P= \{-\lfloor{N/2\rfloor},-\lfloor{N/2\rfloor}+1,\ldots, \lfloor{N/2\rfloor}\}.$ If $N$ is even, then in order to satisfy
the condition $\sum_{i=1}^{N} v_i = 0$ (recall the definition of (\ref{eq:def}) and the Embedding Lemma), no vertex is mapped to the origin.
We denote by $$\lambda_2^I(G)=\min \lambda(\mathbf{v}),$$ where the minimum is taken over all drawings  $\mathbf{v}$ of $G$ on $P$.
Note that when $N$ is odd, then $\lambda_2^I(G)$ is equivalent to the square of the minimum-2-sum, $\sigma_2^2(G)$ 
(recall the definition of minimum-2-sum in the introduction). 

Continuing the investigations by Juvan and Mohar mentioned in the introduction (see~\cite{juvan2,juvan}), we are here interested in properties and bounds for $\lambda_2^I(G)$, similar in spirit to bounds and properties of $\lambda_2(G)$. 
First, the following relation, analogous to the one for $\lambda_2(G)$ from~\cite{fiedler} is obtained easily.
\begin{proposition}
If $G$ and $H$ are edge-disjoint graphs with the same set of vertices, then
$$\lambda_2^I(G)+ \lambda_2^I(H) \leq  \lambda_2^I(G \cup H ).$$
\end{proposition}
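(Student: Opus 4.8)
The plan is to mimic Fiedler's classical superadditivity argument for $\lambda_2(G)$, exploiting the one feature that makes the integer setting behave like the continuous one: in dimension $d=1$ every admissible drawing $\mathbf{v}$ of a graph on $V=\{1,\dots,N\}$ uses exactly the same set of grid points (all of $P$ when $N$ is odd, and $P\setminus\{0\}$ when $N$ is even, since the origin is forbidden). Consequently the denominator in~(\ref{eq:def}), $\sum_{i\in V}\|\vec{v}_i\|^2$, equals a fixed constant $S$, the sum of squares over this common point set, independently of the chosen drawing. This is precisely the difference from the general quotient $\lambda(\mathbf{v})$, and it is what lets the argument go through.

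First I would fix an arbitrary admissible drawing $\mathbf{v}$ and observe that it is simultaneously a drawing of $G$, of $H$, and of $G\cup H$; this is legitimate because all three graphs share the vertex set $V$, hence the same grid $P$ and the same feasibility constraints. Writing $\lambda_G(\mathbf{v})$, $\lambda_H(\mathbf{v})$, $\lambda_{G\cup H}(\mathbf{v})$ for the quotient~(\ref{eq:def}) evaluated with the edge sets of $G$, $H$, and $G\cup H$ respectively, and using that the common denominator $S$ appears in all three, edge-disjointness gives $E(G\cup H)=E(G)\sqcup E(H)$, so the numerators add and hence
\[
\lambda_{G\cup H}(\mathbf{v})=\lambda_G(\mathbf{v})+\lambda_H(\mathbf{v})
\]
for every admissible $\mathbf{v}$.

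Then I would invoke the elementary inequality that the minimum of a sum is at least the sum of the minima. Let $\mathbf{v}^{\star}$ be a drawing attaining $\lambda_2^I(G\cup H)$. Evaluating the displayed identity at $\mathbf{v}^{\star}$ and bounding each summand below by its own minimum yields
\[
\lambda_2^I(G\cup H)=\lambda_G(\mathbf{v}^{\star})+\lambda_H(\mathbf{v}^{\star})\ \ge\ \min_{\mathbf{v}}\lambda_G(\mathbf{v})+\min_{\mathbf{v}}\lambda_H(\mathbf{v})=\lambda_2^I(G)+\lambda_2^I(H),
\]
which is the claimed inequality.

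There is no hard computation here; the only point requiring care --- the ``obstacle'', such as it is --- is the verification that the family of admissible drawings genuinely coincides for $G$, $H$, and $G\cup H$, and that the denominator $S$ is truly drawing-independent (both rest on the three graphs having equal vertex sets and thus equal grids). Once these are granted, additivity of the numerators over the disjoint edge set together with the min-of-a-sum bound completes the proof; the inequality may be strict because the minimizers for $G$, $H$, and $G\cup H$ need not coincide.
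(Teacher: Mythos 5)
Your proof is correct and follows essentially the same route as the paper's (one-line) argument: since the denominator of $\lambda(\mathbf{v})$ does not depend on the edge set, edge-disjointness lets the numerator for $G \cup H$ split into the numerators for $G$ and $H$, and evaluating at a minimizer of $G \cup H$ gives the inequality. Your additional observation that the denominator is the same constant $S$ across all admissible drawings is true but not needed; only its independence from the edge set for a fixed drawing matters.
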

The proof is immediate from the definition of $\lambda_2^I(G)$ and the Embedding Lemma, by splitting the sum of the edge weights for $G \cup H$ into two sums of edge weights, one for $G$ and one for $H$.\\
Denote by $G+e$ the graph obtained from the graph $G$ with $N$ vertices by adding an edge $e$. It is known (see~\cite{abreu}) that $\lambda_2(G) \leq \lambda_2(G+e) \leq \lambda_2(G)+2$. We have a result in the same spirit for $\lambda_2^I(G)$:
\begin{proposition}
Denote by $G+e$ the graph obtained from the graph $G$ with $N$ vertices by adding an edge $e$.
Then $$\lambda_2^I(G)+ \frac{1}{2\sum_{i=1}^{\lfloor{N/2\rfloor}}i^2} \leq \lambda_2^I(G+e) \leq   \lambda_2^I(G)+\frac{N^2}{2\sum_{i=1}^{\lfloor{N/2\rfloor}}i^2}.$$\\
\end{proposition}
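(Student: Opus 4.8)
The plan is to bound the change in $\lambda_2^I$ when a single edge is added, by comparing optimal drawings of $G$ and $G+e$. I would denote by $\mathbf{v}$ an optimal drawing of $G$ (so $\lambda_2^I(G) = \lambda(\mathbf{v})$ with the normalizing denominator $S = \sum_{v \in P} \|v\|^2 = \sum_{i=1}^{\lfloor N/2 \rfloor} i^2 \cdot 2$, matching the denominators in the statement), and by $\mathbf{w}$ an optimal drawing of $G+e$. The key observation is that $S$ is the same for every drawing on $P$, so $\lambda(\mathbf{v}) \cdot S = \sum_{uv \in E(G)} |v_u - v_v|^2$ is simply the edge-weight sum, and adding $e = xy$ increases this numerator by exactly $|v_x - v_y|^2$, which is a positive integer.

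For the \textbf{lower bound}, I would take the optimal drawing $\mathbf{w}$ of $G+e$ and use it as a (not necessarily optimal) drawing of $G$. Deleting the edge $e$ removes the term $|w_x - w_y|^2$ from the numerator, so
\[
\lambda_2^I(G) \cdot S \;\le\; \lambda(\mathbf{w}) \cdot S \;=\; \lambda_2^I(G+e) \cdot S \;-\; |w_x - w_y|^2.
\]
Since $x$ and $y$ occupy \emph{distinct} integer points of $P$, the edge length $|w_x - w_y|$ is a nonzero integer, hence $|w_x - w_y|^2 \ge 1$. Dividing by $S = 2\sum_{i=1}^{\lfloor N/2 \rfloor} i^2$ yields $\lambda_2^I(G+e) \ge \lambda_2^I(G) + \tfrac{1}{2\sum_{i=1}^{\lfloor N/2 \rfloor} i^2}$.

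For the \textbf{upper bound}, I would run the comparison in the other direction: take the optimal drawing $\mathbf{v}$ of $G$ and use it as a drawing of $G+e$. Then
\[
\lambda_2^I(G+e) \cdot S \;\le\; \lambda(\mathbf{v}) \cdot S + |v_x - v_y|^2 \;=\; \lambda_2^I(G) \cdot S + |v_x - v_y|^2.
\]
The largest possible squared edge length on $P$ occurs between the two extreme points $-\lfloor N/2 \rfloor$ and $\lfloor N/2 \rfloor$, giving $|v_x - v_y|^2 \le (2\lfloor N/2 \rfloor)^2 \le N^2$. Dividing by $S$ gives the claimed upper bound $\lambda_2^I(G+e) \le \lambda_2^I(G) + \tfrac{N^2}{2\sum_{i=1}^{\lfloor N/2 \rfloor} i^2}$.

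There is no serious obstacle here; the argument is a direct one-edge perturbation coupled with the fact that $S$ is drawing-independent. The only points requiring a little care are (i) confirming that both $\mathbf{v}$ restricted to $G+e$ and $\mathbf{w}$ restricted to $G$ are \emph{admissible} drawings (they remain bijections onto points of $P$ with $\sum v_i = 0$, since deleting or adding an edge does not change vertex positions), and (ii) verifying the exact value $S = 2\sum_{i=1}^{\lfloor N/2 \rfloor} i^2$ in both the odd and even cases — when $N$ is even the origin is unoccupied, so $P$ contributes two copies each of $1^2, \dots, (N/2)^2$, and when $N$ is odd the extra point at $0$ contributes nothing to $S$, so the formula stands in both cases.
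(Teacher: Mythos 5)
Your proposal is correct and is essentially the paper's own argument, just spelled out in full: the paper disposes of the proposition in one sentence by noting that adding an edge to any drawing changes the (drawing-independent) normalized sum by at least $1$ and at most $N^2$, which is exactly your two-directional comparison of optimal drawings of $G$ and $G+e$. Your extra care about admissibility and the value of $S=2\sum_{i=1}^{\lfloor N/2\rfloor} i^2$ in both parity cases is a welcome tightening of details the paper leaves implicit.
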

Again, the proof is immediate; adding an edge to a drawing of $G$ increases the edge weight by at least $1$ and by at most $N^2$.\\

Let us then consider the Cartesian product of graphs. Recall that the Cartesian product $G \times H$ is defined as follows: 
$V(G \times H)=V(G) \times V(H)$, and $(u,u')(v,v') \in E(G \times H)$ iff either $u=v$ and $u'v' \in E(H)$, or $u'=v'$ and $uv \in E(G)$. 
For the Cartesian product of two graphs $G$ and $H$, Fiedler~\cite{fiedler} proved the relation $\lambda_2(G \times H) = \min\{\lambda_2(G), \lambda_2(H)\}$.
The analogous relation does not hold for $\lambda_2^I(G)$;
$\lambda_2^I(G \times H)$ can be strictly larger than $\min\{\lambda_2^I(G), \lambda_2^I(H)\}$ as can be seen by the example of $C_3 \times P_2$, 
the Cartesian product of a triangle with an edge.\\
In the following, the number of vertices of a graph $G$ is also denoted by $|G|$.

\begin{proposition}\label{prop:cart}
Let $G$ and $H$ be two graphs such that $|G|$ and $|H|$ are odd numbers. Then we have
\begin{equation}
\lambda_2^I(G\times H) \leq \lambda_2^I(G)\left( \frac{|G|^2-1}{|G|^2|H|^2-1}\right)  + \lambda_2^I(H)\left( \frac{|G|^2(|H|^2-1)}{|G|^2|H|^2-1}\right)
\end{equation}
\end{proposition}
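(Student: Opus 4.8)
The plan is to prove this upper bound by exhibiting one explicit drawing of $G\times H$ whose value of $\lambda$ equals exactly the right-hand side; since $\lambda_2^I(G\times H)$ is the minimum of $\lambda(\mathbf{v})$ over all drawings, this suffices. Write $p=|G|$ and $q=|H|$, both odd. Let $\phi$ be an optimal drawing of $G$ onto $\{-(p-1)/2,\ldots,(p-1)/2\}$ and $\psi$ an optimal drawing of $H$ onto $\{-(q-1)/2,\ldots,(q-1)/2\}$, so that $\lambda(\phi)=\lambda_2^I(G)$ and $\lambda(\psi)=\lambda_2^I(H)$. Since these target sets contain exactly $p$ and $q$ integers and are symmetric about the origin, every such drawing is automatically a bijection and satisfies $\sum_u\phi(u)=0$ and $\sum_{u'}\psi(u')=0$.

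I would then place the vertex $(u,u')$ of $G\times H$ at the integer $\mathbf{v}(u,u'):=p\,\psi(u')+\phi(u)$. The first step is to check that $\mathbf{v}$ is a valid drawing, i.e. a bijection onto $P=\{-(pq-1)/2,\ldots,(pq-1)/2\}$: for each fixed value $\psi(u')=b$, the numbers $pb+\phi(u)$ form a block of $p$ consecutive integers centered at $pb$, and as $b$ ranges over $\{-(q-1)/2,\ldots,(q-1)/2\}$ these blocks tile $P$ with no gaps or overlaps (this is just the base-$p$ representation of the integers of $P$). The second step is to compute $\lambda(\mathbf{v})$ by splitting $E(G\times H)$ into the $q$ copies of $E(G)$ (edges sharing an $H$-coordinate) and the $p$ copies of $E(H)$ (edges sharing a $G$-coordinate). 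On a $G$-edge the displacement is $\phi(u)-\phi(v)$, while on an $H$-edge it is $p(\psi(u')-\psi(v'))$, so the numerator equals $q\,E_G+p^3\,E_H$, where $E_G:=\sum_{uv\in E(G)}(\phi(u)-\phi(v))^2$ and $E_H$ is defined analogously. Expanding the denominator $\sum_{(u,u')}\bigl(p\,\psi(u')+\phi(u)\bigr)^2$ yields $p^3\,\Sigma_q+q\,\Sigma_p$ together with a cross term $2p\bigl(\sum_{u'}\psi(u')\bigr)\bigl(\sum_u\phi(u)\bigr)$ that vanishes by the centering noted above; here $\Sigma_p:=\sum_{k=-(p-1)/2}^{(p-1)/2}k^2$.

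Finally, using $E_G=\lambda_2^I(G)\,\Sigma_p$ and $E_H=\lambda_2^I(H)\,\Sigma_q$ turns $\lambda(\mathbf{v})$ into the convex combination
$$\lambda(\mathbf{v})=\frac{q\,\Sigma_p}{q\,\Sigma_p+p^3\,\Sigma_q}\,\lambda_2^I(G)+\frac{p^3\,\Sigma_q}{q\,\Sigma_p+p^3\,\Sigma_q}\,\lambda_2^I(H).$$
Substituting the closed form $\Sigma_p=p(p^2-1)/12$ (the $d=1$ case of Proposition~\ref{lem:S}, equivalently $\Sigma_p=2\sum_{i=1}^{(p-1)/2}i^2$) and the analogous $\Sigma_q=q(q^2-1)/12$, and cancelling the common factor $pq/12$, the two weights become $p^2-1$ and $p^2(q^2-1)$ with common denominator $p^2q^2-1$; these are exactly the coefficients in the statement. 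Since $\lambda_2^I(G\times H)\le\lambda(\mathbf{v})$, the proposition follows.

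I expect the main obstacle to be identifying the correct orientation of the construction, namely which factor carries the high-order digit $p$. Taking the $G$-coordinate $\phi$ as the low-order part is precisely what gives $\lambda_2^I(H)$ the dominant weight $p^2(q^2-1)/(p^2q^2-1)$, matching the asymmetric coefficients in the statement; the mirror-image construction would instead produce the companion bound with the roles of $G$ and $H$ exchanged. Everything else—the tiling/bijectivity check and the cancellation of the cross term—is routine once the centering $\sum\phi=\sum\psi=0$ is invoked.
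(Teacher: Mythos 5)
Your proposal is correct and follows essentially the same route as the paper: your drawing $\mathbf{v}(u,u')=p\,\psi(u')+\phi(u)$ is exactly the paper's construction of $|H|$ consecutive blocks, each an optimal drawing of $G$, arranged according to an optimal drawing of $H$, and your edge decomposition into $q$ copies of $E(G)$ plus $p$ copies of $E(H)$ scaled by $p^2$ reproduces the paper's computation. The only (harmless) difference is that you verify bijectivity and the denominator by hand, whereas the paper invokes its Proposition~\ref{lem:S} for the denominator and leaves the tiling argument implicit.
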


\begin{proof}
We present a drawing of $G \times H$ which attains the claimed bound.
First consider an optimal drawing $H_{opt}$ of $H$ which gives $\lambda_2^I(H)$, and then replace each vertex of $H$ by $|G|$ vertices.
More precisely, the $|H|\cdot |G|$ vertices of $G \times H$ are drawn on $P$ in such a way that we have $|H|$ consecutive copies $G_i$ of $G$ 
(each copy $G_i$ occupies an interval of $|G|$ consecutive points of $P$). Within each $G_i$ the vertices are ordered in the same way such that 
the drawing of $G_i$ is best possible (minimizing the sum of squared edge lengths); denote this drawing of $G_i$ as $G_{opt}$. 
Figure~\ref{fig:cartesian} shows such a drawing of $G\times H$ for $G=C_3$ and $H=P_3$.
\begin{figure}
	\centering
		\includegraphics{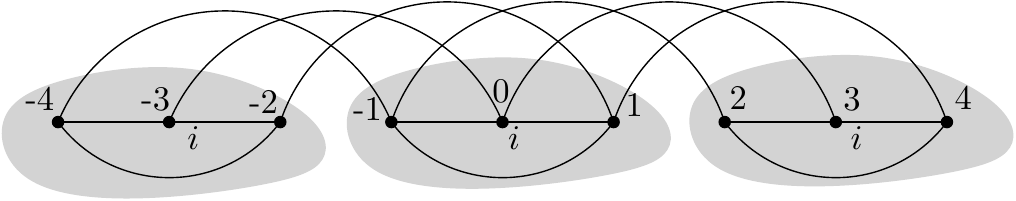}
	\caption{A drawing of the Cartesian product of cycle $C_3$ and path $P_3$. Some straight line edges are drawn as arcs for better visibility.}
	\label{fig:cartesian}
\end{figure}
From Proposition~\ref{lem:S}, with $2M+1=|G||H|$ and $d=1$, we have
$$\sum_{i \in V}||\vec{v_i}||^2=\frac{1}{12}\left(\vert G\vert^2 \vert H\vert^2 -1 \right) \left(\vert G\vert\right) \left(\vert H\vert \right).
$$
Between two consecutive copies of a vertex $i$ of $H$ there are exactly $|G|-1$ points of $P$. Then an edge $e \in H_{opt}$ with squared length $e^2$
has squared edge length $(e|G|)^2$ in our drawing of $G \times H$. We get
$$\lambda_2^I(G\times H) \leq \frac{\vert H\vert\displaystyle\sum_{e \in G_{opt}} e^2}{\frac{1}{12}\left(\vert G\vert^2 \vert H\vert^2 -1 \right)
\left(\vert G\vert\right) \left(\vert H\vert \right)} + \frac{\vert G\vert \displaystyle\sum_{e \in H_{opt}} (e |G|)^2}{\frac{1}{12}
\left(\vert G\vert^2 \vert H\vert^2 -1 \right) \left(\vert G\vert\right) \left(\vert H\vert \right)}$$

$$
 = \frac{\vert H\vert\displaystyle\sum_{G_{opt}} e^2}{\frac{1}{12}\left(\vert G\vert\right) \left(\vert H\vert \right)\left(\vert G\vert^2-1 \right)
\left( \frac{\vert G\vert^2 \vert H\vert^2 -1}{\vert G\vert^2-1}\right)} +
\frac{\vert G\vert ^3\displaystyle\sum_{H_{opt}} e^2}{\frac{1}{12}\left(\vert G\vert\right)
\left(\vert H\vert \right)\left(\vert H\vert^2 -1 \right) \left( \frac{\vert G\vert^2 \vert H\vert^2 -1}{\vert H\vert^2-1}\right)}$$
$$ = \lambda_2^I(G)\left( \frac{|G|^2-1}{|G|^2|H|^2-1}\right)  + \lambda_2^I(H)\left( \frac{|G|^2(|H|^2-1)}{|G|^2|H|^2-1}\right).$$\qed
\end{proof}

\begin{corollary}\label{cor:cartmean}
Let $G$ and $H$ be two graphs such that $|G|=|H|$ is an odd number. Then we have
$$\lambda_2^I(G\times H) \leq \frac{\lambda_2^I(G)+\lambda_2^I(H) }{2}.$$
\end{corollary}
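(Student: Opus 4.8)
The plan is to derive this directly from Proposition~\ref{prop:cart} by exploiting the symmetry of the Cartesian product. First I would specialize the proposition to the case $|G|=|H|=n$, where $n$ is odd. Writing $n^4-1=(n^2-1)(n^2+1)$, the two coefficients in Proposition~\ref{prop:cart} simplify: the coefficient of $\lambda_2^I(G)$ becomes $\tfrac{n^2-1}{n^4-1}=\tfrac{1}{n^2+1}$, and the coefficient of $\lambda_2^I(H)$ becomes $\tfrac{n^2(n^2-1)}{n^4-1}=\tfrac{n^2}{n^2+1}$. Thus the proposition yields the (asymmetric) bound $\lambda_2^I(G\times H)\le \tfrac{1}{n^2+1}\lambda_2^I(G)+\tfrac{n^2}{n^2+1}\lambda_2^I(H)$.

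The key observation is that this bound is not symmetric in $G$ and $H$, whereas the Cartesian product is: $G\times H$ and $H\times G$ are isomorphic graphs and hence have the same value of $\lambda_2^I$. Therefore I would apply Proposition~\ref{prop:cart} a second time, now to $H\times G$, which gives $\lambda_2^I(G\times H)=\lambda_2^I(H\times G)\le \tfrac{1}{n^2+1}\lambda_2^I(H)+\tfrac{n^2}{n^2+1}\lambda_2^I(G)$.

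Finally, I would combine the two upper bounds. Adding them and dividing by $2$, the coefficient of each of $\lambda_2^I(G)$ and $\lambda_2^I(H)$ becomes $\tfrac{1}{2}\cdot\tfrac{1+n^2}{n^2+1}=\tfrac{1}{2}$, so we obtain $\lambda_2^I(G\times H)\le \tfrac{1}{2}\bigl(\lambda_2^I(G)+\lambda_2^I(H)\bigr)$, as claimed. (Equivalently, since both right-hand sides are upper bounds for the same quantity, their average is also an upper bound.)

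There is no genuine obstacle here; the corollary is an immediate consequence of the proposition, and the only point requiring any care is to notice the asymmetry of the bound in Proposition~\ref{prop:cart} and to remove it by invoking the commutativity of the Cartesian product before averaging.
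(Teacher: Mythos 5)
Your proof is correct and follows essentially the same route as the paper: the paper also obtains the second inequality by interchanging the roles of $G$ and $H$ (in the drawing of Proposition~\ref{prop:cart}) and then sums the two bounds, which is exactly your symmetrization via $G\times H\cong H\times G$. Your explicit simplification of the coefficients to $\tfrac{1}{n^2+1}$ and $\tfrac{n^2}{n^2+1}$ is a harmless elaboration of the same argument.
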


\begin{proof}
This follows from the proof of Proposition~\ref{prop:cart},  by interchanging the role of $G$ and $H$ in the drawing, and then by summing the two inequalities.
\qed\end{proof}
\begin{corollary}
If $\lambda_2^I(G)=\lambda_2(G)$ and $|G|$ is odd, then
$$\lambda_2^I(G\times G) = \lambda_2^I(G).$$
\end{corollary}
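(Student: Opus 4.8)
The plan is to establish the two inequalities $\lambda_2^I(G\times G) \le \lambda_2^I(G)$ and $\lambda_2^I(G\times G) \ge \lambda_2^I(G)$ separately. The first is essentially immediate: taking $H=G$ in Corollary~\ref{cor:cartmean} is legitimate because $|G|=|H|$ is odd by hypothesis, and it yields $\lambda_2^I(G\times G) \le \tfrac{1}{2}\bigl(\lambda_2^I(G)+\lambda_2^I(G)\bigr)=\lambda_2^I(G)$. So the only substantive work lies in the reverse inequality.

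For the lower bound I would argue that the integer variant never drops below the true algebraic connectivity. Note that $G\times G$ has $N=|G|^2$ vertices, which is odd since $|G|$ is odd; hence every integer drawing $\mathbf{v}$ of $G\times G$ assigns the vertices to a permutation of the symmetric point set $P=\{-\lfloor N/2\rfloor,\dots,\lfloor N/2\rfloor\}$ and therefore automatically satisfies $\sum_i \vec{v}_i=\mathbf{0}$. Each such drawing is thus an admissible tuple in the Embedding Lemma (Lemma~\ref{lem:emb}), so $\lambda(\mathbf{v}) \ge \lambda_2(G\times G)$ for every integer drawing, and taking the minimum over all integer drawings gives $\lambda_2^I(G\times G) \ge \lambda_2(G\times G)$.

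Next I would invoke Fiedler's product identity $\lambda_2(G\times H)=\min\{\lambda_2(G),\lambda_2(H)\}$ recalled earlier in this section, applied with $H=G$, which gives $\lambda_2(G\times G)=\lambda_2(G)$. Chaining this with the previous step and the hypothesis $\lambda_2(G)=\lambda_2^I(G)$ produces $\lambda_2^I(G\times G) \ge \lambda_2(G\times G)=\lambda_2(G)=\lambda_2^I(G)$, the desired lower bound. Combined with the upper bound from Corollary~\ref{cor:cartmean}, the two inequalities force $\lambda_2^I(G\times G)=\lambda_2^I(G)$.

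I do not anticipate a genuine obstacle: the argument is a short chain of (in)equalities, with all heavy lifting deferred to Corollary~\ref{cor:cartmean}, the Embedding Lemma, and Fiedler's relation. The one point deserving care is the parity bookkeeping, where oddness of $|G|$ is used twice -- once so that $|G|=|H|$ is odd and Corollary~\ref{cor:cartmean} applies, and once so that $|G\times G|=|G|^2$ is odd, guaranteeing that integer drawings on $P$ meet the centering condition $\sum_i \vec{v}_i=\mathbf{0}$ needed to compare $\lambda_2^I$ with $\lambda_2$ through Lemma~\ref{lem:emb}. I would verify that placing exactly one vertex at the origin (which exists as a grid point precisely because $N$ is odd) causes no difficulty, so that the comparison $\lambda_2^I(G\times G)\ge\lambda_2(G\times G)$ holds without qualification.
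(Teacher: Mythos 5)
Your proof is correct and follows exactly the paper's own argument: the upper bound via Corollary~\ref{cor:cartmean} with $H=G$, and the lower bound via the chain $\lambda_2^I(G\times G) \ge \lambda_2(G\times G) = \lambda_2(G) = \lambda_2^I(G)$ using the Embedding Lemma, Fiedler's product identity, and the hypothesis. The only difference is that you spell out the parity bookkeeping (oddness of $|G|^2$ guaranteeing the centering condition) that the paper leaves implicit.
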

\begin{proof}
On the one hand, $\lambda_2^I(G\times G) \geq \lambda_2(G\times G) = \lambda_2(G) = \lambda_2^I(G).$
On the other hand, by Corollary~\ref{cor:cartmean}, $\lambda_2^I(G\times G) \leq \lambda_2^I(G).$
\qed\end{proof}

The assumption of $|G|$ and $|H|$ being odd numbers in Proposition~\ref{prop:cart}
simplifies the calculations. We believe that a similar bound holds when $|G|$ or
$|H|$ are even. Indeed, the drawing for $G \times H$ explained in the proof of Proposition~\ref{prop:cart} can be optimal when $|G|$ and $|H|$ are even. We illustrate this with the hypercube and mention that its eigenvalues and eigenvectors are well known.

\begin{proposition}\label{prop:cube}
For $Q_N$, the hypercube on $N$ vertices, $\lambda_2^I(Q_N)=\lambda_2(Q_N)=2$.
\end{proposition}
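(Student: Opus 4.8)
The plan is to prove the two inequalities $\lambda_2^I(Q_N)\ge 2$ and $\lambda_2^I(Q_N)\le 2$ separately, the first being immediate and the second requiring us to exhibit one optimal integer drawing. For the lower bound, observe that every drawing $\mathbf{v}$ of $Q_N$ on $P$ is a bijection onto $P\setminus\{\mathbf 0\}$ (since $N=2^k$ is even) and therefore satisfies $\sum_i v_i=0$; in particular it is a feasible configuration in the Embedding Lemma (Lemma~\ref{lem:emb}). As $\lambda_2(Q_N)$ is the minimum of $\lambda(\mathbf{v})$ over the strictly larger class of all nonzero zero-sum configurations, we obtain $\lambda_2^I(Q_N)\ge\lambda_2(Q_N)$. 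It thus suffices to recall that $\lambda_2(Q_N)=2$ and to build an integer drawing attaining $\lambda(\mathbf{v})=2$.

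For the value of $\lambda_2$ I would use that $Q_N$ is the $k$-fold Cartesian product of $K_2=P_2$, where $k=\log_2 N$; since the Laplacian eigenvalues of a Cartesian product are the pairwise sums of the factors' eigenvalues and $L(P_2)$ has eigenvalues $0$ and $2$, the spectrum of $Q_N$ is $\{2j:0\le j\le k\}$ with multiplicities $\binom{k}{j}$, so $\lambda_2(Q_N)=2$. The eigenspace of the eigenvalue $2$ is spanned by the $k$ vectors $\chi_i$ defined on $\{0,1\}^k$ by $\chi_i(x)=(-1)^{x_i}$. The key idea for the upper bound is that in dimension $d=1$ we have $\lambda(\mathbf{v})=f^{\top}Lf/f^{\top}f$, the Rayleigh quotient, so any drawing whose position vector $f$ lies in this eigenspace and is a bijection onto $P\setminus\{\mathbf 0\}$ is automatically optimal with $\lambda(\mathbf{v})=2$ (and it has $\sum_x f(x)=0$ because $f\perp\mathbf 1$).

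The heart of the argument — and the step I expect to be the main obstacle — is to choose real coefficients $c_1,\dots,c_k$ so that $f(x)=\sum_{i=1}^k c_i(-1)^{x_i}$ takes each of the $2^k$ points of $P\setminus\{\mathbf 0\}=\{-2^{k-1},\dots,-1,1,\dots,2^{k-1}\}$ exactly once; the obstruction is that the canonical eigenvectors $\chi_i$ take only the two values $\pm1$, so the coefficients must be non-integer and chosen carefully. Writing $\epsilon_i=(-1)^{x_i}$, the task becomes to make the signed sums $\sum_i c_i\epsilon_i$ range bijectively over $P\setminus\{\mathbf 0\}$. I would take
\[ c_1=\frac{2^{k-1}+1}{2}, \qquad c_i=2^{\,i-3}\quad(2\le i\le k), \]
so that $\{c_2,\dots,c_k\}=\tfrac12\{1,2,4,\dots,2^{k-2}\}$. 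A routine range-and-parity check then shows that $\sum_{i=2}^{k}c_i\epsilon_i$ ranges exactly over $\tfrac12\{\pm1,\pm3,\dots,\pm(2^{k-1}-1)\}$; adding $c_1\epsilon_1$ with $\epsilon_1=+1$ shifts this onto the integers $\{1,\dots,2^{k-1}\}$ and with $\epsilon_1=-1$ onto $\{-2^{k-1},\dots,-1\}$, so the union is precisely $P\setminus\{\mathbf 0\}$.

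Putting the pieces together, this $f$ is a valid integer drawing on $P$ (distinct coordinates, none at the origin, summing to $0$) lying in the eigenspace of the eigenvalue $2$, whence $\lambda(\mathbf{v})=2$ and $\lambda_2^I(Q_N)\le 2$. Combined with the lower bound this yields $\lambda_2^I(Q_N)=\lambda_2(Q_N)=2$. The only non-routine ingredient is the bijectivity of the signed-sum map, which a short induction on $k$ (or the explicit check above, using that $\sum_{i}\delta_i 2^{i-1}$ enumerates all odd integers in $[-(2^k-1),2^k-1]$ as $\delta\in\{\pm1\}^k$) settles; everything else is a direct application of Lemma~\ref{lem:emb} and the known hypercube spectrum.
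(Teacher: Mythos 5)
Your proof is correct, and it takes a genuinely different route than the paper. The paper's proof uses the decomposition $Q_N = Q_{N/2}\times P_2$ and recursively builds an optimal drawing in the spirit of Proposition~\ref{prop:cart}: two copies of an optimal drawing of $Q_{N/2}$, one placed in $\{1,\dots,N/2\}$ and one in $\{-N/2,\dots,-1\}$, joined by a perfect matching; the optimality of this drawing is asserted via the known spectrum of the hypercube rather than verified by computation. You instead exhibit a closed-form optimal drawing in one step: with $c_1=(2^{k-1}+1)/2$ and $c_i=2^{i-3}$ for $2\le i\le k$, the vector $f=\sum_i c_i\chi_i$ lies in the $2$-eigenspace of $L(Q_N)$ and, by your signed-sum/parity argument (which is correct: the signed sums $\sum_i \delta_i 2^{i-1}$ enumerate the odd integers in $[-(2^k-1),2^k-1]$ bijectively), $f$ is a bijection onto $P\setminus\{\mathbf 0\}$, so $\lambda(\mathbf{v})=f^{\top}Lf/f^{\top}f=2$ exactly; your lower bound $\lambda_2^I(Q_N)\ge\lambda_2(Q_N)$ via Lemma~\ref{lem:emb} is the same one the paper uses implicitly (every admissible drawing is a bijection onto the symmetric set $P\setminus\{\mathbf 0\}$, hence has zero sum). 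Interestingly, the two constructions produce essentially the same drawing up to an automorphism of $Q_N$: conditioning on $\epsilon_1=\pm1$ splits your drawing into two affine copies of a drawing of $Q_{N/2}$, one per half-interval, joined by the matching that flips the first bit. What differs is the verification. Yours is self-contained and makes the optimality transparent --- the drawing is literally an eigenvector, in line with the paper's remark that in dimension one optimal drawings are eigenvectors and $\lambda(\mathbf{v})$ is a Rayleigh quotient --- whereas the paper's recursive description is shorter and serves its narrative purpose (showing that the product construction of Proposition~\ref{prop:cart} can be optimal even when the factors have an even number of vertices), but as written leaves the value $\lambda(\mathbf{v})=2$ unchecked, which your argument supplies.
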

\begin{proof}
To see this, note that $Q_N = Q_{N/2} \times P_2.$ 
In this case an optimal drawing of $Q_N$ can be obtained from two copies of an optimal drawing for $Q_{N/2}$
using ideas of the drawing of Proposition~\ref{prop:cart}: indeed, one can take an optimal drawing of $Q_{N/2}$ 
once shifted towards $\{1,\ldots, {N/2}\}$ (corresponding to vertices of the hypercube having $0$ in the first dimension), 
and once shifted towards $\{-{N/2},\ldots,-1\}$ (corresponding to vertices of the hypercube having $1$ in the first dimension),
and then connecting them by a matching. This drawing is similar to the one described in Proposition~\ref{prop:cart}; 
in fact, the only difference is that no vertex is mapped to the origin.\qed\end{proof}

\begin{proposition}
There are graphs $G$ with $\lambda_2^I(G \times G) < \lambda_2^I(G)$.
\end{proposition}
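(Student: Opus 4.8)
The plan is to prove this by producing a single explicit example, in the same computational spirit as the optimal drawings found earlier in the paper. The corollary above (stating that $\lambda_2^I(G)=\lambda_2(G)$ together with $|G|$ odd forces $\lambda_2^I(G\times G)=\lambda_2^I(G)$) shows that a \emph{necessary} condition for a strict drop is $\lambda_2^I(G)>\lambda_2(G)$. So the first step is to locate a graph of odd order whose integer algebraic connectivity strictly exceeds its ordinary algebraic connectivity, leaving a window in which the product can improve. Note that for $|G|=3$ the only connected graphs, $P_3$ and $C_3$, satisfy $\lambda_2^I=\lambda_2$, so the smallest candidates have five vertices.

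I would take $G=C_5$. Here $\lambda_2(C_5)=(5-\sqrt5)/2\approx 1.382$, while for any integer drawing on $\{-2,-1,0,1,2\}$ one has $\sum_i v_i^2=10$ by Proposition~\ref{lem:S}, so $\lambda_2^I(C_5)$ is $1/10$ times the minimum of $\sum_{ij\in E}|v_i-v_j|^2$ over Hamiltonian cyclic orderings of these five points. A short finite case analysis (there are only $12$ such orderings up to rotation and reflection; a degree-contribution argument forcing $-2$ and $2$ to attach to their nearest neighbours pins down the optimum) shows this minimum equals $14$, attained e.g. by the cyclic order $(-2,-1,1,2,0)$. Hence $\lambda_2^I(C_5)=14/10=7/5=1.4>\lambda_2(C_5)$, giving a room of about $0.018$.

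For the product $C_5\times C_5$ (the $5\times5$ torus, $25$ vertices) I would combine two bounds. The lower bound $\lambda_2^I(C_5\times C_5)\ge\lambda_2(C_5\times C_5)=\lambda_2(C_5)$ follows from the Embedding Lemma~\ref{lem:emb} together with Fiedler's identity $\lambda_2(G\times H)=\min\{\lambda_2(G),\lambda_2(H)\}$ from~\cite{fiedler}. For the upper bound it suffices to exhibit \emph{one} drawing $\mathbf v$ of $C_5\times C_5$ on $\{-12,\dots,12\}$ with $\lambda(\mathbf v)<7/5$; since $\sum_i v_i^2=1300$ by Proposition~\ref{lem:S}, this means $\sum_{ij\in E}|v_i-v_j|^2<1820$, and then $\lambda_2^I(C_5\times C_5)\le\lambda(\mathbf v)<\lambda_2^I(C_5)$ finishes the proof. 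The block (copy-by-copy) layout from the proof of Corollary~\ref{cor:cartmean} attains exactly $1820$, with long between-copy edges of squared length up to $100$ dominating the sum, so what is required is a \emph{genuinely two-dimensional} linear arrangement that balances the four edge directions of the torus better than this asymmetric product layout.

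The main obstacle is precisely this last construction: the window between the lower bound $\lambda_2(C_5)\approx1.382$ and the target $\lambda_2^I(C_5)=1.4$ is narrow, so any improving arrangement beats the product drawing only marginally and is not obvious to write down by hand. I would therefore obtain it by an explicit ad hoc ordering exploiting the symmetry of $\mathbb{Z}_5\times\mathbb{Z}_5$, or, as with the other figures in this paper, by computer search (simulated annealing), recording the resulting edge sum $<1820$. Once such a $\mathbf v$ is displayed, the chain $\lambda_2^I(C_5\times C_5)\le\lambda(\mathbf v)<\lambda_2^I(C_5)$ proves the proposition. I would close by remarking that the same scheme applies to any odd-order $G$ with $\lambda_2^I(G)>\lambda_2(G)$ for which the block layout is suboptimal, so the phenomenon is not peculiar to $C_5$.
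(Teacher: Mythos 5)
Your overall strategy is the same as the paper's: exhibit one concrete graph $G$, compute $\lambda_2^I(G)$ exactly, and then beat it with one explicit drawing of $G\times G$. The computations you do carry out are correct: $\lambda_2^I(C_5)=14/10$ (your cyclic order $(-2,-1,1,2,0)$ is optimal among the $12$ cyclic orders), and the copy-by-copy layout of $C_5\times C_5$ gives exactly $70+1750=1820$, i.e., value $7/5$. But your argument stops precisely at the step that would prove the proposition: you never exhibit a drawing of $C_5\times C_5$ with edge sum at most $1819$, you only announce that you would find one ``by an explicit ad hoc ordering \ldots or by computer search.'' Until such a drawing is written down (or its existence otherwise established), the chain $\lambda_2^I(C_5\times C_5)\le\lambda(\mathbf{v})<\lambda_2^I(C_5)$ has no $\mathbf{v}$ to run on, and nothing is proved.

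Moreover, the gap is not merely presentational: it is genuinely unclear that $C_5$ is a valid example at all. Your target window is tiny; you need an integer edge sum in $\{1797,\dots,1819\}$, since $\lambda_2(C_5\times C_5)\cdot 1300\approx 1796.6$. And within the family of block layouts the value $1820$ is rigid: if two copies at copy-distance $k$ use different internal arrangements, the matching edges between them cost $\sum_i(5k+\delta_i)^2=125k^2+\sum_i\delta_i^2\ge 125k^2$, because the offsets $\delta_i$ sum to $0$; so staggering the copies never helps, and any improvement must come from a layout that is not copy-by-copy at all. Whether such a layout beats $1820$ is exactly the question you defer to simulated annealing; if in fact $\lambda_2^I(C_5\times C_5)\ge 7/5$, your candidate simply fails. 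The paper sidesteps this risk by engineering a large margin rather than a marginal one: it takes $G$ to be the triangle with a two-edge path attached, for which $\lambda_2^I(G)=8/10$ (optimality of $f(i)=i-3$ is a short finite check), and then writes down a fully explicit drawing $g$ of $G\times G$ with edge sum $775$, hence value $775/1300<0.6<0.8$. The lesson is that for an existence statement of this kind the example should be chosen so the verification margin is comfortable; your necessary condition $\lambda_2^I(G)>\lambda_2(G)$ is a sensible filter, but selecting the graph that barely satisfies it makes the one construction you must actually display as hard as possible.
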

\begin{proof}
Let $G$ be the graph consisting of a triangle, with labels of the vertices $1,2,3$, and a path of length $2$ attached to vertex $3$; 
label these vertices $4,5$, in this order. Clearly, the function $f: V(G) \rightarrow \{-2,\ldots,2\}$ given by $f(i) = i-3$, $1 \leq i \leq 5$, 
defines an optimal drawing of $G$, yielding $\lambda_2^I(G) = \frac{8}{10}$. On the other hand, consider the drawing 
$g: V(G) \times V(G) \rightarrow \{-12,\ldots,12\}$ given as follows: $g(i,j) = -12+3(i-1)+(j-1)$ for $1 \leq i,j \leq 3$,  $g(i,j) = -3+2(i-1)+(j-4)$ 
for $1  \leq i \leq 3, 4 \leq j \leq 5$, and $g(i,j) = 3+5(i-4)+(j-1)$ for $4 \leq i \leq 5, 1 \leq j \leq 5$. 
The drawing given by $g$ gives an upper bound on $\lambda_2^I (G \times G)$, and hence $\lambda_2^I (G \times G) \leq \frac{775}{1300}< 0.6$.
\qed\end{proof}

Whereas it is obvious that for graphs $G$ with an odd number $N$ of vertices, the optimal drawings of $\lambda_2^I(G)$ and $\sigma^2_2(G)$ coincide,
this is not always the case for $N$ even.

\begin{proposition}
There exist graphs $G$ with $|G|$ even, for which the optimal drawings of $\lambda_2^I(G)$ and $\sigma^2_2(G)$ are different.
\end{proposition}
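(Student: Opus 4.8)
The plan is to exhibit a single explicit graph $G$ with an even number of vertices together with two concrete drawings that separate the two objective functions: one drawing that is optimal for $\sigma_2^2(G)$ but suboptimal for $\lambda_2^I(G)$, and another that is optimal for $\lambda_2^I(G)$ but suboptimal for $\sigma_2^2(G)$. The crucial structural observation, which drives the whole construction, is that the two quantities differ precisely in their feasible sets when $N$ is even. For $\sigma_2^2(G)$ the drawing is a bijection onto $\{-N/2,\dots,-1,1,\dots,N/2\}$ with \emph{no} constraint on the coordinate sum (equivalently, onto any $N$ consecutive integers after translation), whereas for $\lambda_2^I(G)$ the Embedding Lemma forces the additional balancing condition $\sum_i v_i = 0$, which is why the origin must be skipped. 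Thus I would look for a graph whose $\sigma_2^2$-optimal labeling is \emph{not} balanced around its own center, so that enforcing the zero-sum constraint changes which labeling is optimal.

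First I would choose a small graph for which the computations are transparent and the asymmetry is genuine; a natural candidate is a path $P_N$ (or a slightly perturbed path) on an even number of vertices, since for paths the optimal linear arrangement is well understood and one can read off edge-length contributions directly. I would then write down explicitly the label set $\{-N/2,\dots,-1,1,\dots,N/2\}$ used by $\lambda_2^I$ and contrast it with a shifted consecutive-integer placement; because these two point sets are genuinely different when $N$ is even (one has a gap at $0$, the other does not), the minimizing permutations need not agree. Concretely I would compute $\sum_{uv\in E}|v_u-v_v|^2$ for the canonical monotone labeling under each point set, and exhibit a second labeling that does strictly better for one functional while doing strictly worse for the other.

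The key steps, in order, are: (1) fix the explicit graph $G$ with $|G|$ even; (2) describe the two feasible point sets, noting that the $\lambda_2^I$ set omits $0$ and is symmetric while the $\sigma_2^2$ set need not be; (3) identify the unique (or at least a canonical) $\sigma_2^2$-optimal labeling and verify it is optimal by a short exchange/convexity argument on edge lengths; (4) verify that this same labeling, once its image is forced into the symmetric zero-sum set, is beaten by a different labeling for $\lambda_2^I$, by direct comparison of the two squared-edge-length totals; and (5) conclude that the optimal drawings differ. I would keep the verification elementary, relying only on the fact that $\lambda_2^I(G)=\min_{\mathbf v}\lambda(\mathbf v)$ and that $\sigma_2^2$ is the same numerator over the unrestricted consecutive-integer placements.

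The main obstacle I anticipate is step (3)–(4): not merely finding \emph{a} difference in the point sets, but certifying that the induced optimal \emph{permutations} actually diverge rather than coincidentally matching. It is easy to pick a graph where, even though the two point sets differ, the same monotone arrangement happens to minimize both sums; avoiding this requires a graph whose optimal arrangement is sensitive to the precise spacing of the available positions near the center. I would therefore either engineer an edge that straddles the origin gap so that its contribution jumps by a controlled amount under the zero-sum constraint, or verify the claim by a small exhaustive check and report the resulting numerical values of $\lambda_2^I(G)$ and $\sigma_2^2(G)$ together with their distinct optimal labelings, exactly in the style of the preceding proposition that used explicit labelings and numeric bounds.
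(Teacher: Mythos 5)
Your central structural observation --- that for even $N$ the feasible set for $\lambda_2^I$ is the symmetric set $\{-N/2,\dots,-1,1,\dots,N/2\}$ with a gap at the origin, while $\sigma_2^2$ ranges over consecutive integers --- is exactly the mechanism the paper exploits, and your overall plan (one explicit even-order graph, identify its $\sigma_2^2$-optimal labeling by an exchange/lower-bound argument, then exhibit a different labeling that is strictly better for $\lambda_2^I$) matches the paper's proof in structure. However, there is a genuine gap: you never produce a graph that works, and the one concrete candidate you name, the path $P_N$, provably fails. For $P_N$ with $N$ even, the monotone labeling is optimal for \emph{both} functionals: it gives $\sigma_2^2 = N-1$ (every edge of length $1$, clearly optimal), and for $\lambda_2^I$ every drawing must contain at least one edge joining a negative position to a positive one, which has squared length at least $4$, so every drawing costs at least $(N-2)\cdot 1 + 4 = N+2$; the monotone labeling attains this, and attaining it forces the monotone order within each block. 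Hence for paths the optimal drawings coincide, and the proposition cannot be proved this way. You do anticipate this obstacle in your final paragraph, but the fallback you offer (``engineer an edge that straddles the origin gap'' or ``a small exhaustive check'') is precisely the substantive content of the proof, and it is left undone: every connected graph has an edge straddling the gap, so the engineering needed is subtler than that phrase suggests.

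For comparison, the paper's construction is an $8$-vertex graph with a vertex of degree $5$ plus two further edges: the degree-$5$ vertex is forced toward the center of any good drawing, since its five incident edges have total squared length at least $2\cdot 1^2 + 2\cdot 2^2 + 3^2$ in any labeling, which pins down the $\sigma_2^2$-optimal orders ($12345678$ or $12354678$); one then checks directly that the order $18275346$ beats both of these on the gapped point set, so no $\sigma_2^2$-optimal drawing is $\lambda_2^I$-optimal. A high-degree hub whose star of edges interacts badly with the doubled central gap is the kind of ``sensitivity to spacing near the center'' you correctly ask for, but your proposal stops short of constructing it. (Minor slip: your first paragraph describes $\sigma_2^2$ as a bijection onto the gapped symmetric set; by the paper's definition it is a bijection onto $\{1,\dots,N\}$, as your later paragraphs correctly assume.)
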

\begin{proof}
Consider the graph $G$ shown in Figure~\ref{fig:graph}. 
An optimal drawing for $\sigma^2_2(G)$ is given by ordering the vertices in the order $12354678$ or $12345678$. 
Indeed, in any drawing, the five edges incident to vertex $5$ together have squared edge length at least 
$2\cdot 1^2+2 \cdot 2^2 + 3^2$ and the other two edges have squared edge length at least $1$. 
It is easily checked that for $\lambda_2^I(G)$,  $1 8 2 7 5 3 4 6$ is a better embedding  than $12354678$ or $12345678$. 
\qed\end{proof}

\begin{figure}
	\centering
		\includegraphics{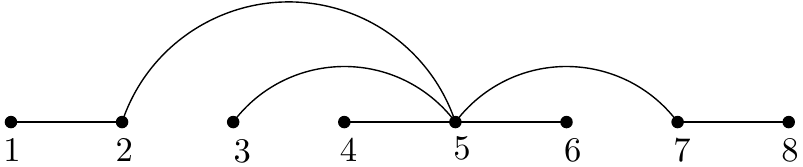}
	\caption{A graph $G$ which has different drawings for $\lambda_2^I(G)$ and for $\sigma^2_2(G)$.}
	\label{fig:graph}
\end{figure}

\section{Conclusion}
In this paper we gave drawings minimizing as well as maximizing $\lambda(\mathbf{v})$, and we analyzed properties of an integer variant of the algebraic connectivity. It would be interesting to characterize the class of graphs $G$ for which $\lambda_2(G)=\lambda_2^I(G)$.\\
{\small
\noindent \textbf {Acknowledgments.}}

We thank Igsyl Dom\'inguez for valuable discussions on early versions of this work. We also thank the reviewers for their very helpful comments.

\parpic{\includegraphics[width=0.13 \textwidth]{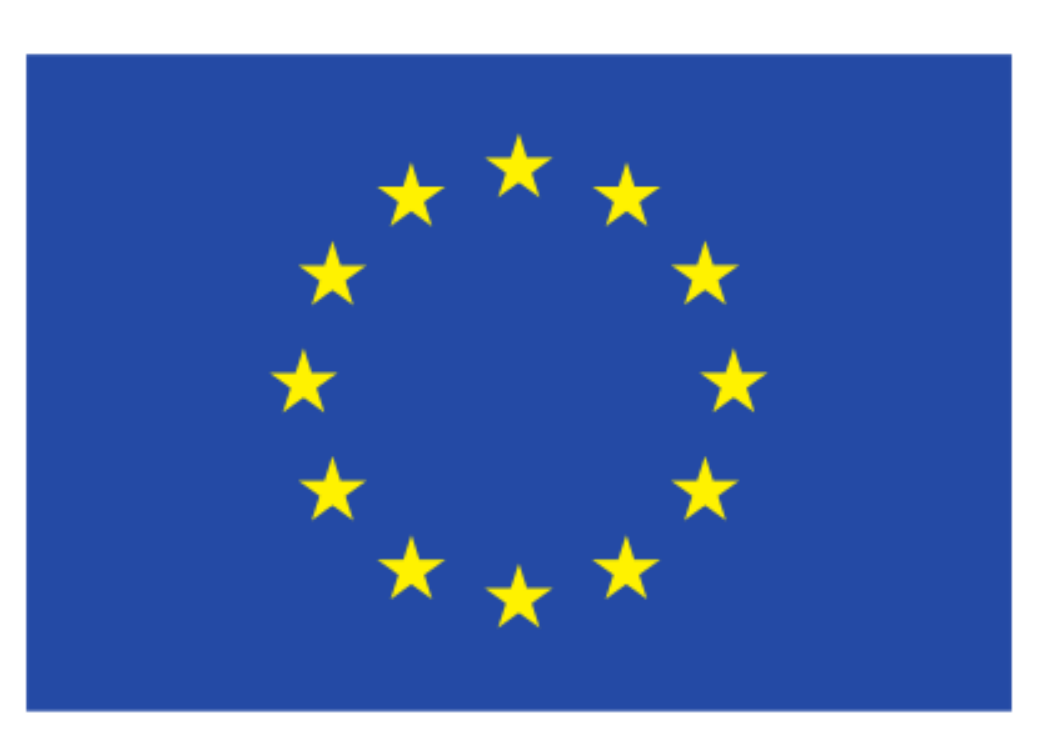}}
\noindent This project has received funding from the European Union's\\ Horizon 2020 research and innovation programme under
the\\ Marie Sk\l{}odowska-Curie grant agreement No 734922.

Clemens Huemer was supported by projects MINECO MTM2015-63791-R and Gen. \ Cat. \ DGR 2017SGR1336.
Ruy Fabila-Monroy and Carlos Hidalgo-Toscano were supported by Conacyt grant 253261.

\newpage

\section{Appendix}

In this appendix we give the omitted proofs of Section~\ref{sec:minimize}.

\propS*
  \begin{proof}
 We have that 
\[S=\sum_{v \in P}  \|v \|^2  =\sum_{(x_1,\dots,x_d) \in P} \left (x_1^2+\cdots + x_d^2 \right)=2d(2M+1)^{d-1} \sum_{i=1}^M i^2.\]
The last equation comes from counting the number of appearances of the term $i^2$ in the left hand side of the equation. This is equivalent to counting the number of times $i^2$ is equal to the
squared coordinate of a vector $v$ of $P$. Such a coordinate may be
a $i$ or a $-i$, this gives a factor of two; it has $d$ possibilities to appear as a coordinate of $v$, this gives
a factor of $d$; once the sign and position are fixed, the other $(d-1)$ coordinates can take
any one of $(2M+1)$ possible values; this gives a total of $2d(2M+1)^{d-1}$ vectors.
Finally, 
\[2d(2M+1)^{d-1} \sum_{i=1}^M i^2 = 2d(2M+1)^{d-1}\frac{M (M+1) (2M+1) }{6}.\]\qed
 \end{proof}

\propProd*
\begin{proof}
For the  case $r=2$,
$$A_1 \cdot A_2 =\sum_{\substack{v \in A_1 \\ w \in A_2 }} v \cdot w =
\left(\sum_{\substack{v \in A_1}} v\right)\cdot \left(\sum_{\substack{w \in A_2}} w\right)$$ 
$$=
\left(\sum_{\substack{v \in A_1}} v\right)\cdot \left(-\sum_{\substack{v \in A_1}} v\right) 
=-\left\|\sum_{\substack{v \in A_1}} v\right\|^2 \leq 0.$$
In the same way, $$A_1 \cdot A_2 =
\left(\sum_{\substack{w \in A_2}} w\right)\cdot \left(-\sum_{\substack{w \in A_2}} w\right) 
=-\left\|\sum_{\substack{w \in A_2}} w\right\|^2 \leq 0.$$
Summing the two equations, the result follows for $r=2$. Let then $r>2.$ For each $i \in \{1,\ldots,r\}$ we have
$$\sum_{\substack{j=1 \\ j \neq i}}^{r} A_i \cdot A_j =\sum_{\substack{j=1 \\ j \neq i}}^{r}\sum_{\substack{v \in A_i \\ w \in A_j }} v \cdot w =
\left(\sum_{\substack{v \in A_i}} v\right)\cdot \left(\sum_{\substack{w \in \cup_{j=1}^{r} A_j \\j \neq i}} w\right) 
=-\left\|\sum_{\substack{v \in A_i}} v\right\|^2 \leq 0,$$ 
where we applied the result for $r=2$. Then, when summing these $r$ equations (summing over all $i$), each term $A_i \cdot A_j$ appears exactly twice in the sum. The result follows.
\qed\end{proof}


\begin{thebibliography}{99}

\bibitem{abreu}
 de Abreu, N. M. M.:
Old and new results on the algebraic connectivity of graphs.
Linear Algebra and its Applications {\bf{423}}(1), 53--73 (2007). doi: 10.1016/j.laa.2006.08.017


\bibitem{apostol}
Apostol, T.M., Mnatsakanian M.A.: Sums of squares of distances in m-space.
Amer. Math. Monthly {\bf{110}}, 516--526 (2003). doi: 10.1080/00029890.2003.11919989

\bibitem{auren}
Aurenhammer, F., Klein, R., Lee, D.-T.: Voronoi diagrams and Delaunay triangulations. World Scientific (2013). doi: 10.1142/8685 


\bibitem{bolla}
Bolla, M.: Spectral clustering and biclustering, Wiley (2013). doi: 10.1002/9781118650684

\bibitem{chinn}
 Chinn, P. Z.,  Chv\'{a}talov\'{a}, J.,  Dewdney, A. K.,  Gibbs, N. E.: The bandwidth problem for graphs and matrices - a survey.
 Journal of Graph Theory {\bf{6}}, 223--254 (1982). doi: 10.1002/jgt.3190060302

\bibitem{chvatalova}
Chv\'{a}talov\'{a}, J.: Optimal labeling of a product of two paths. Discrete Math. {\bf{11}}, 249--253 (1975). doi: 10.1016/0012-365X(75)90039-4


\bibitem{du}
Du, Q., Faber, V., Gunzburger,  M.: Centroidal Voronoi Tessellations: Applications and Algorithms. SIAM Review {\bf{41}}(4), 637--676 (1999).
doi: 10.1137/S0036144599352836

\bibitem{fiedler}
Fiedler, M.: Algebraic connectivity of graphs. Czechoslovak Mathematical Journal {\bf{23}}(2), 298--305 (1973). doi: 10.21136/CMJ

\bibitem{juvan2}
Juvan, M., Mohar, B.: Laplace eigenvalues and bandwidth-type invariants of graphs. Journal of Graph Theory {\bf{17}}(3), 393--407 (1993).
doi: 10.1002/jgt.3190170313


\bibitem{juvan}
Juvan, M., Mohar, B.: Optimal linear labelings and eigenvalues of graphs. Discrete Applied Mathematics {\bf{36}}, 153--168  (1992).
doi: 10.1016/0166-218X(92)90229-4

\bibitem{koren}
Koren, Y.:
Drawing graphs by eigenvectors: theory and practice. Computers and Mathematics with Applications {\bf{49}}, 1867--1888 (2005).
doi: 10.1016/j.camwa.2004.08.015 



\bibitem{rocha}
Rocha, I.:
Spectral bisection with two eigenvectors. 
 In:  Drmota,  M.,  Kang, M., Krattenthaler, C., Ne\v{s}et\v{r}il, J. (eds.) EUROCOMB'17.
Electronic Notes in Discrete Mathematics, vol. 61, pp. 1019--1025  (2017). doi: 10.1016/j.endm.2017.07.067


\bibitem{spielman}
 Spielman, D. A., Teng, S.-H.: Spectral partitioning works: Planar graphs and finite element meshes. Linear Algebra and its Applications {\bf{421}}, 284--305  (2007).
doi: 10.1016/j.laa.2006.07.020

\end{thebibliography}
\end{document}